\newcommand\utimes{\mathbin{\ooalign{$\cup$\cr%
   \hfil\raise0.42ex\hbox{$\scriptscriptstyle\times$}\hfil\cr}}}
\theoremstyle{plain}
\theoremstyle{definition}
\newtheorem{proposition}{Proposition}
\newtheorem{theorem}[proposition]{Theorem}
\newtheorem{corollary}[proposition]{Corollary}
\newtheorem{lemma}[proposition]{Lemma}
\theoremstyle{definition}
\newtheorem{definition}[proposition]{Definition}
\newenvironment{bprooftree}
  {\leavevmode\hbox\bgroup}
  {\DisplayProof\egroup}
\title{Undecidability of Linear Logics without Weakening}
\author{Jun Suzuki}
\address{Graduate School of Humanities and Human Sciences, 
  Hokkaido University}
\email{suzuki.jun.g2@elms.hokudai.ac.jp }
\author{Katsuhiko Sano}
  \address{Faculty of Humanities and Human Sciences, 
  Hokkaido University}
  \email{v-sano@let.hokudai.ac.jp}
\begin{document}
\begin{abstract}
The goal of this paper is to establish that it remains undecidable whether a sequent is provable in two systems in which a weakening rule for an exponential modality is completely omitted from classical propositional linear logic $\mathbf{CLL}$ introduced by Girard (1987), which is shown to be undecidable by Lincoln et al. (1992). We introduce two logical systems, $\mathbf{CLLR}$ and $\mathbf{CLLRR}$. The first system, $\mathbf{CLLR}$, is obtained by omitting the weakening rule for the exponential modality of $\mathbf{CLL}$. 
The system $\mathbf{CLLR}$ has been studied by several authors, including Melliès-Tabareau (2010), but its undecidability was unknown. This paper shows the undecidability of $\mathbf{CLLR}$ by reducing it to the undecidability of $\mathbf{CLL}$, where the units $\mathbf{1}$ and $\bot$ play a crucial role in simulating the weakening rule. We also omit these units from the syntax and inference rules of $\mathbf{CLLR}$ in order to define the second system, $\mathbf{CLLRR}$. The undecidability of $\mathbf{CLLRR}$ is established by showing that the system can simulate any two-counter machine proposed by Minsky (1961).
\end{abstract}
\maketitle

\section{Introduction}
The goal of this paper is to establish that it remains undecidable whether a sequent is provable in two systems in which a weakening rule for an exponential modality is completely omitted from classical propositional linear logic. 
Linear logic, introduced by Girard~\cite{Girard1987}, is a logical system in which the structural rules of weakening and contraction, representing discard and copy, respectively, are restricted. 
Because of this restriction, formulas cannot be freely copied and discarded, so that formulas in the antecedent of a sequent can be considered as resources. However, we can apply weakening and contraction to a formula in the antecedent prefixed with the exponential modality ``$\oc$''. Resources with ``$\oc$'' can be freely copied and discarded. What then if we restrict the structural rules of the exponential modality? This paper studies systems obtained by omitting the weakening rules of the exponential modality and shows their undecidability. 

Structural rules and decidability of proof systems have a subtle relationship. Table \ref{Strucural Rules and Decidability} summarizes whether linear logic systems with added or excluded structural rules are decidable. Classical propositional linear logic $\mathbf{CLL}$, in which weakening and contraction apply only to formulas with exponential modalities, is undecidable (Lincoln et al. \cite[Theorem 3.7]{Lincoln1992}). If one removes the restriction of weakening and contraction in $\mathbf{CLL}$, i.e., adds unlimited weakening and unlimited contraction to $\mathbf{CLL}$, then the system becomes equivalent to a propositional modal logic $\mathbf{S4}$, which is decidable. The system obtained by adding only unlimited weakening to $\mathbf{CLL}$ is also decidable (Kopylov \cite[Theorem 3]{Kopylov2001}). Similarly, $\mathbf{CLL}$ with unlimited contraction is decidable (Okada-Terui \cite[Corollary 1]{OkadaTerui1999}). What if one omits structural rules for exponential modalities? For example, $\mathbf{MALL}$, classical propositional linear logic without exponential modalities, which does have neither weakening nor contraction, is decidable (Lincoln et al. \cite[Theorem 2.2]{Lincoln1992}). However, non-commutative classical propositional linear logic, which we can regard as $\mathbf{CLL}$ without the third structural rule \emph{exchange}, is undecidable (Lincoln et al. \cite[Theorem 4.8]{Lincoln1992}). Non-commutative $\mathbf{CLL}$ is so complex that it is undecidable if one omits units and additive connectives. Furthermore, non-commutative propositional linear logic without weakening is also undecidable. This system is still undecidable if one omits units $\mathbf{1}$ and $\bot$ from its language. These undecidabilities can be deduced by a theorem and a corollary about non-commutative subexponential linear logic in Kanovich et al. \cite[Theorem 8, Corollary 14]{KanovichKuznetsovNigamScedrov2019}. Subexponential linear logic is explained below.

\begin{table}[htbp]
    \renewcommand{\arraystretch}{1.0}
    \caption{Structural Rules and Decidability}
    \label{Strucural Rules and Decidability}
    \begin{tabular}{|c|c|c|}
    \hline
    System & $+/-$ Structural rules & Decidable or Undecidable\\\hline
    $\mathbf{S4}$ & $\mathbf{CLL}$ $+$ \{Weak, Cont\} & \begin{tabular}{c} \textbf{Decidable}\\ cf. Blackburn et al. \cite[Corollary 6.8]{BlackburnRijkeVenema_2001} \end{tabular} \\\hline
    \textbf{CLLW} & $\mathbf{CLL}$ $+$ Weakening & \begin{tabular}{c}\textbf{Decidable}\\
    Kopylov~\cite[Theorem 3]{Kopylov2001}
    \end{tabular}\\\hline
    \textbf{CLLC} & $\mathbf{CLL}$ $+$ Contraction & \begin{tabular}{c}\textbf{Decidable}\\ Okada-Terui \cite[Corollary 1]{OkadaTerui1999}\end{tabular}\\\hline
    $\mathbf{CLL}$ & $\mathbf{CLL}$ & \begin{tabular}{c} \textbf{Undecidable}\\ Lincoln et al. \cite[Theorem 3.7]{Lincoln1992} \end{tabular}\\\hline
    \textbf{NCCLL} & $\mathbf{CLL}$ $-$ Exchange & \begin{tabular}{c}\textbf{Undecidable}\\ Lincoln et al. \cite[Theorem 4.8]{Lincoln1992} \end{tabular}\\\hline
    $\mathbf{CLLR}$ & $\mathbf{CLL}$ $-$ Weakening & \begin{tabular}{c}\textbf{Undecidable} \\ Theorem \ref{Main Theorem 1} of this paper \end{tabular} \\\hline
    $\mathbf{CLLRR}$ & \begin{tabular}{c}
         $\mathbf{CLL}$ $-$ Weakening\\
         Lang: $\mathbf{CLL} - \{\mathbf{1}, \bot\}$
    \end{tabular}  &\begin{tabular}{c} \textbf{Undecidable} \\ Theorem \ref{UndCLLRR} of this paper \end{tabular} \\\hline
      & $\mathbf{CLL} - \{\text{Weak, Exch}\}$ & \begin{tabular}{c} \textbf{Undecidable} \\ Kanovich et al. \cite[Theorem 8]{KanovichKuznetsovNigamScedrov2019} \end{tabular}\\\hline
     &
    \begin{tabular}{c}
         $\mathbf{CLL}$ $-$ \{Weak, Exch\}\\
         Language: $\mathbf{CLL} - \{\mathbf{1}, \bot\}$
    \end{tabular} & \begin{tabular}{c} \textbf{Undecidable} \\ Kanovich et al. \cite[Corollary 14]{KanovichKuznetsovNigamScedrov2019}\end{tabular}\\\hline
    \textbf{MALL} & Language: $\mathbf{CLL}$ $-$ \{$\oc, \wn$\} & \begin{tabular}{c} \textbf{Decidable} \\ Lincoln et al. \cite[Theorem 2.2]{Lincoln1992} \end{tabular}\\\hline
    \end{tabular}
    \centering
\end{table}
Then it is not obvious whether classical propositional linear logic without only weakening is decidable.
In this paper, we introduce two systems, $\mathbf{CLLR}$ and $\mathbf{CLLRR}$, and show that both of them are undecidable. The first system, $\mathbf{CLLR}$, is obtained by omitting weakening for the exponential modality of $\mathbf{CLL}$. 
This system was introduced by Melliès-Tabareau \cite{MelliesTabareau2010}\footnote{Melliès-Tabareau refer to Jacobs \cite{Jacobs1994} as previous research (\cite[p. 636]{MelliesTabareau2010}), but a sequent calculus that Jacobs introduced is different from that of Melliès-Tabareau and ours.} and studied for its game semantics and categorical interpretation, but its decidability has been still open. 
As we will see in Section \ref{Undecidability of CLLR}, since $\mathbf{CLLR}$ can simulate weakening, the undecidability of $\mathbf{CLLR}$ can be reduced to that of $\mathbf{CLL}$. 
In this simulation, the logical units $\mathbf{1}$ and $\bot$ play a crucial role. Therefore, we remove them from the syntax, along with the associated rules of $\mathbf{CLLR}$, in order to define a new system, $\mathbf{CLLRR}$. In this system, the simulation is impossible. Formally, our main results are as follows.
\begin{enumerate}
    \item The problem of whether a sequent is provable in $\mathbf{CLLR}$ is undecidable.
    \item The problem of whether a sequent is provable in $\mathbf{CLLRR}$ is undecidable.
\end{enumerate}

We consider linear logic systems in which contraction is restricted only to formulas in the antecedent prefixed with ``$\oc$'' (and dually in the succedent prefixed with ``$\wn$''), while weakening is not allowed at all. This system is capable of representing resources that can be freely copied but not discarded. A study of logics restricting structural rules of exponential modalities is initiated by Danos et al. \cite[Section 5]{DanosJoinetSchellinx1993}. The system introduced by them, called \emph{subexponential linear logic} after Nigam-Miller \cite{NigamMiller2009}, indexes exponential modalities and control whether structural rules can be applied. Our system, in which there is no modality that allows weakening at all, is a special case of subexponential linear logic. Danos et al. \cite[Proposition 5.2]{DanosJoinetSchellinx1993} proves cut-elimination theorem for subexponential linear logic, which yields cut-elimination of our systems, $\mathbf{CLLR}$ and $\mathbf{CLLRR}$. However, the paper~\cite{DanosJoinetSchellinx1993} did not address the question of whether the two systems are decidable. We can also use our arguments to demonstrate the undecidability of the intuitionistic variants of $\mathbf{CLLR}$ and $\mathbf{CLLRR}$.

This paper is structured as follows. Section \ref{Preliminaries} introduces syntax and proof systems of ordinary classical linear logic. Since we apply a semantic argument to prove the undecidability of $\mathbf{CLLRR}$, we also introduce phase semantics, a standard algebraic semantics of linear logic. In Section \ref{Undecidability of CLLR}, we establish the undecidability of $\mathbf{CLLR}$ (Theorem \ref{Main Theorem 1}) by reducing it to the undecidability of $\mathbf{CLL}$. In Section \ref{Undecidability of CLLRR}, we first introduce the notion of two-counter machine, a Turing complete computational model. Then we establish the undecidability of $\mathbf{CLLRR}$ (Theorem \ref{UndCLLRR}) by showing that it can simulate any two-counter machine with a semantic argument employed in Lafont \cite{Lafont1996}. Section \ref{comclusion} shows that the results can be extended to intuitionistic linear logic. Section \ref{comclusion} concludes the paper with future research directions.

\section{Preliminaries}\label{Preliminaries}
\subsection{Syntax and Proof System of Classical Propositional Linear Logic}
We define a syntax $\mathcal{L}$ of classical propositional linear logic as follows: 
\begin{align*}
    \mathcal{L} \ni A \Coloneqq &~p \mid {\bf 1} \mid \bot \mid \top \mid {\bf 0} \mid {\sim}A \mid A \otimes A \mid A \parr A \mid A  \with A \mid A \oplus A \mid A \multimap A \mid \oc A \mid \wn A,
\end{align*}
where $p$ is an element of countably infinite set $\sf{Prop}$ of propositional variables. Conjunction and disjunction are each separated into two types: multiplicative (context splitting) and additive (context sharing), as in the following table.
\begin{table}[h]
    \centering
    \begin{tabular}{|c|c|c|}
        \hline
        & conjunction & disjunction \\\hline
        \begin{tabular}{c}
            multiplicative \\
            (context splitting)
        \end{tabular} & \Large{$\otimes$} & \Large{$\parr$} \\\hline
        \begin{tabular}{c}
            additive \\
            (context sharing)
        \end{tabular} & \Large{$\with$} & \Large{$\oplus$} \\\hline
    \end{tabular}
\end{table}
The symbols $\mathbf{1}$, $\bot$, $\top$, $\mathbf{0}$ are the units of $\otimes$, $\parr$, $\with$, $\oplus$ respectively. We use $\Gamma$, $\Delta$, etc. to denote finite multisets of formulas. An expression of the form $\Gamma \Rightarrow \Delta$ is called a \emph{sequent} of $\mathcal{L}$. A \emph{sequent calculus} $\mathbf{CLL}$ of classical propositional linear logic is shown in Table \ref{table:CLL}, where all of the rules indicated by the dashed box will be eliminated in two steps later in this paper.

\begin{table}[htbp]
    \caption{Sequent Calculus of $\mathbf{CLL}$}
    \centering
    \renewcommand{\arraystretch}{2.0}
    \begin{tabular}{|cc|}
    \hline
        \begin{bprooftree}
  \AxiomC{}
  \RightLabel{\textbf{id}}
  \UnaryInfC{$A \Rightarrow A$}
  \end{bprooftree}
&
  \begin{bprooftree}
  \AxiomC{$\Gamma \Rightarrow \Delta, A$}
  \AxiomC{$A, \Gamma' \Rightarrow \Delta'$}
  \RightLabel{$Cut$}
  \BinaryInfC{$\Gamma, \Gamma' \Rightarrow \Delta, \Delta'$}
  \end{bprooftree} \\
  \multicolumn{2}{|c|}{
  \begin{bprooftree}
  \AxiomC{$A, \Gamma \Rightarrow \Delta$}
  \RightLabel{[${\sim}r$]}
  \UnaryInfC{$\Gamma \Rightarrow \Delta, {\sim}A$}
  \end{bprooftree}
  \begin{bprooftree}
  \AxiomC{$\Gamma \Rightarrow \Delta, A$}
  \RightLabel{[${\sim}l$]}
  \UnaryInfC{${\sim}A, \Gamma \Rightarrow \Delta$}
  \end{bprooftree}
  \dbox{
\begin{bprooftree}
  \AxiomC{$\Gamma \Rightarrow \Delta$}
  \RightLabel{[$\bot r$]}
  \UnaryInfC{$\Gamma \Rightarrow \Delta, \bot$}
\end{bprooftree}
\begin{bprooftree}
  \AxiomC{}
  \RightLabel{[$\bot l$]}
  \UnaryInfC{$\bot \Rightarrow$}
\end{bprooftree}}
}
\\
  \multicolumn{2}{|c|}{
  \dbox{
\begin{bprooftree}
  \AxiomC{}
  \RightLabel{[$\mathbf{1} r$]}
  \UnaryInfC{$\Rightarrow \mathbf{1}$}
\end{bprooftree}
\begin{bprooftree}
  \AxiomC{$\Gamma \Rightarrow \Delta$}
  \RightLabel{[$\mathbf{1} l$]}
  \UnaryInfC{$\mathbf{1}, \Gamma \Rightarrow \Delta$}
\end{bprooftree}
}
\begin{bprooftree}
  \AxiomC{}
  \RightLabel{[$\top r$]}
  \UnaryInfC{$\Gamma \Rightarrow \Delta, \top$}
\end{bprooftree}
\begin{bprooftree}
  \AxiomC{}
  \RightLabel{[$\mathbf{0} l$]}
  \UnaryInfC{$\mathbf{0}, \Gamma \Rightarrow \Delta$}
\end{bprooftree}}\\
  \begin{bprooftree}
  \AxiomC{$\Gamma \Rightarrow \Delta, A$}
  \AxiomC{$\Gamma' \Rightarrow \Delta', B$}
  \RightLabel{[$\otimes r$]}
  \BinaryInfC{$\Gamma, \Gamma' \Rightarrow \Delta, \Delta', A \otimes B$}
  \end{bprooftree}
&
  \begin{bprooftree}
  \AxiomC{$A, B, \Gamma \Rightarrow \Delta$}
  \RightLabel{[$\otimes l$]}
  \UnaryInfC{$A \otimes B, \Gamma \Rightarrow \Delta$}
  \end{bprooftree}\\

  \begin{bprooftree}
  \AxiomC{$\Gamma \Rightarrow \Delta, A$}
  \AxiomC{$\Gamma \Rightarrow \Delta, B$}
  \RightLabel{[$\with r$]}
  \BinaryInfC{$\Gamma \Rightarrow \Delta, A \with B$}
  \end{bprooftree}
&
  \begin{bprooftree}
  \AxiomC{$A_i, \Gamma \Rightarrow \Delta$}
  \RightLabel{[$\with l_i$] $(i = 0, 1)$}
  \UnaryInfC{$A_0 \with A_1, \Gamma \Rightarrow \Delta$}
  \end{bprooftree}\\
  
  \begin{bprooftree}
  \AxiomC{$\Gamma \Rightarrow \Delta, A, B$}
  \RightLabel{[$\parr r$]}
  \UnaryInfC{$\Gamma \Rightarrow \Delta, A \parr B$}
  \end{bprooftree}
&
  \begin{bprooftree}
  \AxiomC{$A, \Gamma \Rightarrow \Delta$}
  \AxiomC{$B, \Gamma' \Rightarrow \Delta'$}
  \RightLabel{[$\parr l$]}
  \BinaryInfC{$A \parr B, \Gamma, \Gamma' \Rightarrow \Delta, \Delta'$}
  \end{bprooftree}\\
  
  \begin{bprooftree}
  \AxiomC{$\Gamma \Rightarrow \Delta, A_i$}
  \RightLabel{[$\oplus r_i$] $(i = 0, 1)$}
  \UnaryInfC{$\Gamma \Rightarrow \Delta, A_0 \oplus A_1$}
  \end{bprooftree}
&
  \begin{bprooftree}
  \AxiomC{$A, \Gamma \Rightarrow \Delta$}
  \AxiomC{$B, \Gamma \Rightarrow \Delta$}
  \RightLabel{[$\oplus l$]}
  \BinaryInfC{$A \oplus B, \Gamma \Rightarrow \Delta$}
  \end{bprooftree}\\

  \begin{bprooftree}
  \AxiomC{$A, \Gamma \Rightarrow \Delta, B$}
  \RightLabel{[$\multimap r$]}
  \UnaryInfC{$\Gamma \Rightarrow \Delta, A \multimap B$}
  \end{bprooftree}
&
  \begin{bprooftree}
  \AxiomC{$\Gamma \Rightarrow \Delta, A$}
  \AxiomC{$B, \Gamma' \Rightarrow \Delta'$}
  \RightLabel{[$\multimap l$]}
  \BinaryInfC{$A \multimap B, \Gamma, \Gamma' \Rightarrow \Delta, \Delta'$}
  \end{bprooftree}\\
\dbox{
  \begin{bprooftree}
  \AxiomC{$\Gamma \Rightarrow \Delta$}
  \RightLabel{[$\oc \mathbf{W}$]}
  \UnaryInfC{$\oc A, \Gamma \Rightarrow \Delta$}
  \end{bprooftree}
  }
&
  \begin{bprooftree}
  \AxiomC{$\oc A, \oc A, \Gamma \Rightarrow \Delta$}
  \RightLabel{[$\oc \mathbf{C}$]}
  \UnaryInfC{$\oc A, \Gamma \Rightarrow \Delta$}
  \end{bprooftree}
\\
  \begin{bprooftree}
  \AxiomC{$\oc \Gamma \Rightarrow \wn\Delta, A$}
  \RightLabel{[$\oc r$]}
  \UnaryInfC{$\oc \Gamma \Rightarrow \wn \Delta, \oc A$}
  \end{bprooftree}
&
  \begin{bprooftree}
  \AxiomC{$A, \Gamma \Rightarrow \Delta$}
  \RightLabel{[$\oc l$]}
  \UnaryInfC{$\oc A, \Gamma \Rightarrow \Delta$}
  \end{bprooftree}
\\
 \dbox{
  \begin{bprooftree}
  \AxiomC{$\Gamma \Rightarrow \Delta$}
  \RightLabel{[$\wn \mathbf{W}$]}
  \UnaryInfC{$\Gamma \Rightarrow \Delta, \wn A$}
  \end{bprooftree}
  }
&
  \begin{bprooftree}
  \AxiomC{$\Gamma \Rightarrow \Delta, \wn A, \wn A$}
  \RightLabel{[$\wn \mathbf{C}$]}
  \UnaryInfC{$\Gamma \Rightarrow \Delta, \wn A$}
  \end{bprooftree}
\\
  \begin{bprooftree}
  \AxiomC{$\Gamma \Rightarrow \Delta, A$}
  \RightLabel{[$\wn r$]}
  \UnaryInfC{$\Gamma \Rightarrow \Delta, \wn A$}
  \end{bprooftree}
&
  \begin{bprooftree}
  \AxiomC{$A, \oc\Gamma \Rightarrow \wn\Delta$}
  \RightLabel{[$\wn l$]}
  \UnaryInfC{$\wn A,\oc\Gamma \Rightarrow \wn\Delta$}
  \end{bprooftree}
\rule[-15pt]{0pt}{30pt}
  \\
\hline
    \end{tabular}
    \label{table:CLL}
\end{table}
\subsection{Phase Semantics and Soundness}
Let us introduce \emph{phase semantics}, a standard algebraic semantics of linear logic. Phase semantics was originally proposed in Girard \cite[Section 1]{Girard1987}, but we adopt the definition of Girard \cite[Section 2.1.2]{Girard1995}. See Okada \cite[Section 3]{Okada1998a} for more detail.

A \emph{phase space} is a pair $(\mathcal{M}, \bot)$ where $\mathcal{M} = (|\mathcal{M}|, \cdot, 1)$ is a commutative monoid and $\bot$ be an arbitrary subset of the domain $|\mathcal{M}|$ of $\mathcal{M}$.
Let $X, Y \subseteq |{\mathcal M}|$ in what follows. We define $X \cdot Y \subseteq |\mathcal{M}|$ as follows:
    \[X \cdot Y = \{x \cdot y \mid \text{$x \in X$ and $y \in Y$}\}.\]
The operator $``\cdot"$ and parentheses may be omitted as $(X \cdot Y) \cdot Z = XYZ$. Furthermore, define ${\sim}X \subseteq |{\mathcal M}|$ for $X$ as follows:
\[
    {\sim}X = \{y \in |{\mathcal M}| \mid (\forall x \in X)[x \cdot y \in \bot]\}.
\]
Then, $X \subseteq Y$ implies ${\sim}Y \subseteq {\sim}X$. We call $X$ a \emph{fact} if ${\sim\sim}X = X$. A phase space has the following property.
\begin{proposition}[cf. {Okada \cite[Lemma 1]{Okada1998a}}]\label{closure}
    Let $((|{\mathcal M}|, \cdot, 1), \bot)$ be a phase space. The operation ${\sim\sim}$ satisfies the following properties for any $X, Y \subseteq |\mathcal{M}|$:
    \begin{enumerate}
        \item $X \subseteq {\sim\sim}X$,
        \item ${\sim\sim}({\sim\sim}X) \subseteq {\sim\sim}X$,
        \item if $X \subseteq Y$ then, ${\sim\sim}X \subseteq {\sim\sim}Y$,
        \item ${\sim\sim}X \cdot {\sim\sim}Y \subseteq {\sim\sim}(XY)$.
    \end{enumerate}
\end{proposition}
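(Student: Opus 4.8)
The plan is to prove all four clauses directly from the definition of ${\sim}X$, using two ingredients already at hand: the explicit membership condition $y \in {\sim}X \iff (\forall x \in X)[x\cdot y \in \bot]$, and the antitonicity $X \subseteq Y \Rightarrow {\sim}Y \subseteq {\sim}X$ recorded just before the statement. Commutativity of $\mathcal{M}$ will be used silently to reorder products so that each expression matches the shape demanded by the definition of ${\sim}$. Clauses (1) and (3) are immediate. For (1), take $x \in X$ and any $y \in {\sim}X$; then $y \in {\sim}X$ gives $x'\cdot y \in \bot$ for every $x' \in X$, in particular $x \cdot y \in \bot$, and since $y$ was arbitrary this says $x \in {\sim\sim}X$. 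For (3), apply antitonicity twice: $X \subseteq Y$ yields ${\sim}Y \subseteq {\sim}X$, and applying ${\sim}$ once more yields ${\sim\sim}X \subseteq {\sim\sim}Y$.

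For (2) I would first establish the triple-negation identity ${\sim\sim\sim}X = {\sim}X$. The inclusion ${\sim\sim\sim}X \subseteq {\sim}X$ follows by applying antitonicity to the inclusion $X \subseteq {\sim\sim}X$ from clause (1); the reverse inclusion ${\sim}X \subseteq {\sim\sim\sim}X$ is exactly clause (1) applied with ${\sim}X$ in place of $X$. Then ${\sim\sim}({\sim\sim}X) = {\sim}({\sim\sim\sim}X) = {\sim}({\sim}X) = {\sim\sim}X$, which gives (2), in fact with equality.

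The genuine work is clause (4), which I expect to be the main obstacle, as it is the only statement that couples the monoid multiplication to the pairing into $\bot$. I would fix $a \in {\sim\sim}X$, $b \in {\sim\sim}Y$, and an arbitrary $w \in {\sim}(XY)$, and aim to show $a \cdot b \cdot w \in \bot$; as $w$ ranges over ${\sim}(XY)$ this is precisely $a \cdot b \in {\sim\sim}(XY)$. The argument proceeds in two stages. First, for each fixed $y \in Y$ the element $y \cdot w$ lies in ${\sim}X$: by definition $w \in {\sim}(XY)$ means $x \cdot y \cdot w \in \bot$ for all $x \in X$ and $y \in Y$, and for a fixed $y$ this is exactly the condition $y \cdot w \in {\sim}X$. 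Feeding these elements of ${\sim}X$ to $a \in {\sim\sim}X$ gives $a \cdot y \cdot w \in \bot$ for every $y \in Y$, which after reordering says $a \cdot w \in {\sim}Y$. Feeding this single element of ${\sim}Y$ to $b \in {\sim\sim}Y$ then yields $a \cdot b \cdot w \in \bot$, as required. The only delicate point is keeping track of the commutative reorderings at each step so that the products line up with the quantifier pattern in the definition of ${\sim}$; once that bookkeeping is fixed, the chain $y\cdot w \in {\sim}X \leadsto a\cdot w \in {\sim}Y \leadsto a\cdot b\cdot w \in \bot$ closes the proof.
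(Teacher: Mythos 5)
Your proof is correct: clauses (1)--(3) follow exactly as you say from the membership condition and antitonicity, the triple-negation identity ${\sim\sim\sim}X = {\sim}X$ gives (2) with equality, and the currying argument $y\cdot w \in {\sim}X \leadsto a\cdot w \in {\sim}Y \leadsto a\cdot b\cdot w \in \bot$ (with commutativity handling the reorderings) correctly establishes (4). The paper itself gives no proof, deferring to Okada's Lemma 1, and your argument is the standard one that reference supplies, so there is nothing to compare beyond noting that your reasoning fills the omitted details accurately.
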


Then we define a \emph{phase model} $\mathcal{P}$ $=$ $(({\mathcal M}, \bot), v)$ as a pair of a phase space $(\mathcal{M}, \bot)$ and a function $v \colon {\sf Prop} \to \wp(|{\mathcal M}|)$ such that $v$ satisfies the following: for all $p \in {\sf Prop}$, $v(p)$ is a fact, i.e., $v(p) = {\sim\sim}v(p)$.

Let ${\mathcal I} = \{i \in {\sim\sim}\{1\} \mid i \cdot i = i\}$. For a phase model $\mathcal{P} = ((|{\mathcal M}|, \cdot, 1), \bot, v)$, we define an \emph{interpretation} $[\cdot]_\mathcal{P} \colon \mathcal{L} \to \wp(|{\mathcal M}|)$ of formulas inductively as follows (if it is clear from the context which model is considered, the subscript can be omitted), although only those of $\otimes$, $\with$, $\oplus$ $\bot$, $\multimap$ and $\oc$ are used in this paper:
\begin{itemize}
    \item $[p] = v(p),\; [\mathbf{1}] = {\sim\sim}\{1\}, \; [\bot] = \bot, \; [\top] = |\mathcal{M}|, \; [\mathbf{0}] = {\sim\sim}\emptyset,\; [{\sim}A] = {\sim}[A],$
    \item $[A \otimes B] = {\sim\sim}([A][B]), \quad [A \with B] = [A] \cap [B],$
    \item $[A \parr B] = {\sim}({\sim}[A] \cdot {\sim}[B]), \quad [A \oplus B] = {\sim\sim}([A] \cup [B]),$
    \item $[A \multimap B] = \{z \in |\mathcal{M}| \mid (\forall x \in [A])[x \cdot z \in [B]]\},$
    \item $[\oc A] = {\sim}{\sim}([A] \cap {\mathcal I}), \quad [\wn A] = {\sim}({\sim}[A] \cap \mathcal{I}).$
\end{itemize}
It is noted that all of these interpretations are facts. We say that a formula $A \in \mathcal{L}$ is \emph{true} in $\mathcal{P}$ if $1 \in [A]_\mathcal{P}$. Classical propositional linear logic $\mathbf{CLL}$ is sound for the phase semantics as follows, where we define abbreviations $\bigotimes \Gamma$ and $\bigparr \Gamma$ for a finite multiset $\Gamma$ inductively as follows: $\bigotimes \big(\Gamma \cup \{A\}\big) = \big(\bigotimes \Gamma\big) \otimes A$, $\bigotimes \{A\} = A$, $\bigotimes \emptyset = \mathbf{1}$, and $\bigparr \big(\Gamma \cup \{A\}\big) = \big(\bigparr \Gamma \big) \parr A$, $\bigparr \{Y\} = Y$, $\bigparr \emptyset = \bot$.

\begin{proposition}[{cf. Okada \cite[Theorem 1]{Okada1998a}}]\label{soundness}
    Let $\Gamma \Rightarrow \Delta$ be a sequent of $\mathcal{L}$. If $\Gamma \Rightarrow \Delta$ is provable in $\mathbf{CLL}$, then for any phase model $\mathcal{P}$, $[\bigotimes \Gamma]_\mathcal{P} \subseteq [\bigparr \Delta]_\mathcal{P}$. In particular, for all $A \in \mathcal{L}$, if $\Rightarrow A$ is provable in $\mathbf{CLL}$ then for any phase model $\mathcal{P}$,  $A$ is true in $\mathcal{P}$.
\end{proposition}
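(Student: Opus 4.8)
The plan is to argue by induction on the height of a derivation of $\Gamma\Rightarrow\Delta$ in $\mathbf{CLL}$, showing at each inference step that the target inclusion $[\bigotimes\Gamma]_{\mathcal P}\subseteq[\bigparr\Delta]_{\mathcal P}$ is preserved. Before starting the induction I would record a more symmetric reformulation of this inclusion. Writing $\Gamma=A_1,\dots,A_n$ and $\Delta=B_1,\dots,B_m$, the definition of $\parr$ together with the adjunction $X\subseteq{\sim}Y\iff X\cdot Y\subseteq\bot$ (immediate from commutativity and the definition of ${\sim}$) shows that $[\bigotimes\Gamma]\subseteq[\bigparr\Delta]$ is equivalent to
\[
[A_1]\cdots[A_n]\cdot{\sim}[B_1]\cdots{\sim}[B_m]\subseteq\bot .
\]
Here, using that every orthogonal ${\sim}(\cdots)$ is a fact together with Proposition~\ref{closure}, one may discard the outer double-negation closure present in $[\bigotimes\Gamma]={\sim\sim}([A_1]\cdots[A_n])$. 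This "product-in-$\bot$" form is what makes the context-splitting rules transparent, since splitting a multiset corresponds exactly to splitting the product.

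With the reformulation in hand, the base cases are immediate: $\mathbf{id}$ reduces to $[A]\subseteq[A]$; $\top r$ and $\mathbf 0 l$ hold because $[\top]=|\mathcal M|$ and $[\mathbf 0]={\sim\sim}\emptyset$ are the top and bottom facts; and $\mathbf 1 r$, $\bot l$ are read off the definitions of $[\mathbf 1]$ and $[\bot]$. For the inductive step I would treat the connectives group by group. The multiplicative rules ($\otimes$, $\parr$, $\multimap$) and $Cut$ follow from the product form together with closure property~(4) of Proposition~\ref{closure}, which lets one combine double-negation closures across a product, plus the adjunction defining $[\,\multimap]$. The additive rules ($\with$, $\oplus$) use that $[A\with B]=[A]\cap[B]$ and $[A\oplus B]={\sim\sim}([A]\cup[B])$ together with the fact that the context is shared, so monotonicity (property~(3)) suffices. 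The negation rules merely move a formula between the two sides and are again handled by the same adjunction.

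The main obstacle will be the exponential rules, which is exactly where the idempotent core $\mathcal I=\{i\in{\sim\sim}\{1\}:i\cdot i=i\}$ inside $[\oc A]={\sim\sim}([A]\cap\mathcal I)$ is used. I would first isolate three auxiliary facts about $[\oc A]$: (i) $[\oc A]\subseteq[A]$, since $[A]\cap\mathcal I\subseteq[A]$ and $[A]$ is a fact; (ii) $[\oc A]\subseteq[\mathbf 1]$, since $[A]\cap\mathcal I\subseteq\mathcal I\subseteq{\sim\sim}\{1\}=[\mathbf 1]$; and (iii) an idempotency statement $[\oc A]\cdot[\oc A]\subseteq[\oc A]$, following from $i\cdot i=i$ on $\mathcal I$ together with property~(4). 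Fact~(ii) validates weakening $\oc\mathbf W$ (the extra $\oc A$ is absorbed into the unit), fact~(iii) validates contraction $\oc\mathbf C$, and $\oc l$, $\wn r$ follow from~(i) and monotonicity; the rules for $\wn$ are dual.

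The genuinely delicate case is promotion $\oc r$ (and dually $\wn l$). Here monotonicity points the wrong way: because $[\oc A]\subseteq[A]$, the conclusion $\oc\Gamma\Rightarrow\wn\Delta,\oc A$ is strictly stronger than the premise $\oc\Gamma\Rightarrow\wn\Delta,A$, so one cannot simply weaken the succedent. The point to push through is that every factor contributed by the context $\oc\Gamma$ on the left and by ${\sim}[\wn\Delta]$ on the right already lies in the idempotent core $\mathcal I$, so that any element witnessing membership in $[A]$ and built from such factors in fact witnesses membership in $[A]\cap\mathcal I$, hence in $[\oc A]$. Making this precise — proving the relevant product is contained in $\mathcal I$ and then commuting that containment past the outer ${\sim\sim}$ in $[\oc A]={\sim\sim}([A]\cap\mathcal I)$ — is the technical heart of the argument and the step I would expect to spend the most care on. Finally, the last sentence of the proposition is a one-line corollary: taking $\Gamma=\emptyset$ and $\Delta=A$ gives ${\sim\sim}\{1\}\subseteq[A]$, and since $1\in{\sim\sim}\{1\}$ we conclude $1\in[A]$, i.e.\ $A$ is true in $\mathcal P$.
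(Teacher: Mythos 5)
The paper does not actually prove Proposition~\ref{soundness}: it is imported from Okada \cite[Theorem 1]{Okada1998a}, so there is no in-paper argument to compare against line by line. Your plan is the standard soundness proof --- induction on the derivation after rewriting $[\bigotimes \Gamma] \subseteq [\bigparr \Delta]$ in the ``product lands in $\bot$'' form --- and this is essentially the argument of the cited source. Your treatment of promotion $[\oc r]$, the case you rightly single out as delicate, is also the correct one: each factor contributed by $\oc\Gamma$ and ${\sim}[\wn\Delta]$ is the ${\sim\sim}$-closure of a subset of $\mathcal{I}$, the product of those underlying subsets lies in $[A] \cap \mathcal{I}$ (using $\mathcal{I}\cdot\mathcal{I} \subseteq \mathcal{I}$ together with the premise), and Proposition~\ref{closure}(4) lets you pull the product of closures inside a single outer ${\sim\sim}$.

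One step is stated with the inclusion pointing the wrong way. For contraction $[\oc\mathbf{C}]$ you must pass from $[\oc A]\cdot[\oc A]\cdot Z \subseteq \bot$ to $[\oc A]\cdot Z \subseteq \bot$, and your fact~(iii), $[\oc A]\cdot[\oc A] \subseteq [\oc A]$, supports only the converse implication (it would validate duplicating an occurrence of $\oc A$, not contracting two into one). What contraction actually needs is that $Y = [A]\cap\mathcal{I}$ satisfies $Y \subseteq Y\cdot Y$, which is immediate from $i = i\cdot i$; then $Y\cdot Z \subseteq Y\cdot Y\cdot Z \subseteq \bot$, so $Z \subseteq {\sim}Y = {\sim}({\sim\sim}Y) = {\sim}[\oc A]$ and hence $[\oc A]\cdot Z \subseteq \bot$. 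The ingredient you name ($i\cdot i = i$ on $\mathcal{I}$) is the right one; only the direction of the derived inclusion must be reversed. With that repaired, the sketch goes through as described, and the final sentence of the proposition follows exactly as you say from $1 \in {\sim\sim}\{1\} = [\bigotimes\emptyset]$.
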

\noindent It is noted that the latter statement follows from the former by $1 \in [\mathbf{1}] = \bigotimes \emptyset$. 
The following is useful in Section \ref{Undecidability of CLLRR}.
\begin{lemma}\label{inclusion}
    Let $\mathcal{P}$ be a phase model. If a sequent $\Gamma, A_1,..., A_n \Rightarrow C ~(n \geq 1)$ of $\mathcal{L}$ is provable in $\mathbf{CLL}$ and $B$ is true in $\mathcal{P}$ for all $B \in \Gamma$, then $[A_1]_\mathcal{P}\cdots[A_n]_\mathcal{P} \subseteq [C]_\mathcal{P}$.
\end{lemma}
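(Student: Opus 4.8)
The plan is to obtain the inclusion directly from the soundness theorem (Proposition~\ref{soundness}), using the hypothesis that each $B \in \Gamma$ is true together with the closure properties of Proposition~\ref{closure}. Throughout, write $\Gamma = \{B_1, \dots, B_m\}$.

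First I would record the auxiliary computation that for any finite multiset $\Theta = \{D_1, \dots, D_k\}$ one has $[\bigotimes \Theta]_\mathcal{P} = {\sim\sim}([D_1]_\mathcal{P} \cdots [D_k]_\mathcal{P})$, where the empty product is read as $\{1\}$. This is proved by induction on $k$ following the inductive definition of $\bigotimes$, using that each $[D_i]_\mathcal{P}$ is a fact. The only nonroutine point is the identity ${\sim\sim}({\sim\sim}X \cdot Y) = {\sim\sim}(X \cdot Y)$: the inclusion $\subseteq$ follows by applying Proposition~\ref{closure}(1) to $Y$, then (4), then the monotonicity and idempotence in (3) and (2); the reverse inclusion follows from (1) applied to $X$ and monotonicity (3).

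Next, applying Proposition~\ref{soundness} to the provable sequent $\Gamma, A_1, \dots, A_n \Rightarrow C$ and noting $[\bigparr\{C\}]_\mathcal{P} = [C]_\mathcal{P}$, the auxiliary computation yields
\[
{\sim\sim}\big([B_1] \cdots [B_m]\, [A_1] \cdots [A_n]\big) \subseteq [C].
\]
Since each $B_i$ is true, $1 \in [B_i]$, that is $\{1\} \subseteq [B_i]$, so inserting $m$ copies of the unit $1$ shows $[A_1]\cdots[A_n] \subseteq [B_1]\cdots[B_m]\,[A_1]\cdots[A_n]$ by monotonicity of the monoid product on subsets. Combining this with $X \subseteq {\sim\sim}X$ (Proposition~\ref{closure}(1)) and monotonicity of ${\sim\sim}$ (Proposition~\ref{closure}(3)) gives
\[
[A_1]\cdots[A_n] \subseteq {\sim\sim}\big([A_1]\cdots[A_n]\big) \subseteq {\sim\sim}\big([B_1]\cdots[B_m]\,[A_1]\cdots[A_n]\big) \subseteq [C],
\]
which is exactly the claim.

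The only real obstacle is the auxiliary identification of $[\bigotimes\Theta]_\mathcal{P}$ with the double-negation closure of the product of the factor interpretations; once that bookkeeping is in place, the remainder is pure monotonicity. The role of the truth hypothesis on $\Gamma$ is precisely to let each factor $[B_i]$ be absorbed by inserting the unit $1$, so that the superfluous context $\Gamma$ can be dropped from the left-hand side.
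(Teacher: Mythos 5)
Your proposal is correct and follows essentially the same route as the paper's proof: apply Proposition~\ref{soundness} to the whole sequent, identify $[\bigotimes\Theta]_\mathcal{P}$ with the (closure of the) product of the factor interpretations, and absorb each true $B_i$ via $1 \in [B_i]$. In fact you are slightly more careful than the paper, which asserts $[\bigotimes\{B_1,\dots,B_m\}]_\mathcal{P} = [B_1]_\mathcal{P}\cdots[B_m]_\mathcal{P}$ without the outer ${\sim\sim}$; your version with ${\sim\sim}([B_1]\cdots[B_m][A_1]\cdots[A_n])$ and the final appeal to $X \subseteq {\sim\sim}X$ and monotonicity is the precise form of that step, and it goes through because $[C]$ is a fact.
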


\begin{proof}
    Let $n \geq 1$ and suppose that $\Gamma, A_1,\ldots, A_n \Rightarrow C$ is provable in $\mathbf{CLL}$ and that $B$ is true in $\mathcal{P}$ for all $B \in \Gamma$. 
    By Proposition \ref{soundness}, 
    $[\bigotimes (\Gamma \cup \{A_1,\ldots, A_n\}) ]_\mathcal{P} \subseteq [C]_\mathcal{P}$ holds. 
    Since an interpretation of any formula is a fact, Proposition \ref{closure} implies that  $[\bigotimes \{B_1, ..., B_m\}]_\mathcal{P} = [B_1]_\mathcal{P}\cdots [B_m]_\mathcal{P}$ 
    for all formulas $B_1, \ldots, B_m$. 
    Since $1 \in [B]_{\mathcal{P}}$ for all $B \in \Gamma$, we get $[A_1]_\mathcal{P}\cdots[A_n]_\mathcal{P} \subseteq [C]_\mathcal{P}$. 
\end{proof}
\section{Undecidability of CLLR}\label{Undecidability of CLLR}
We introduce a system of classical propositional linear logic without weakening, which already occurs in Melliès-Tabareau \cite{MelliesTabareau2010}.
\begin{definition}
    Define $\mathbf{CLLR}$ as a system obtained by excluding the rules $[\oc \mathbf{W}]$ and $[\wn \mathbf{W}]$ from $\mathbf{CLL}$ of Table \ref{table:CLL}.
\end{definition}
By translating formulas with the units $\mathbf{1}$ and $\bot$, 
this section shows that $\mathbf{CLLR}$ can simulate $\mathbf{CLL}$, which is undecidable. As a result, we show the undecidability of $\mathbf{CLLR}$ by this simulation. For this purpose, we define two translations $t_{l}$ and $t_{r}$, noting that similar translations are also found in Liang-Miller \cite{LiangMiller2009}.
\begin{definition}
    We define translations $t_l, t_r \colon \mathcal{L} \to \mathcal{L}$ by simultaneous induction as follows:
    \begin{itemize}
        \item $t_l(p) = t_r(p) = p,\; t_l(\mathbf{1}) = t_r(\mathbf{1}) = \mathbf{1},\; t_l(\top) = t_r(\top) = \top, \\
        t_l(\bot) = t_r(\bot) = \bot,\; t_l(\mathbf{0}) = t_r(\mathbf{0}) = \mathbf{0},$
        \item $t_l({\sim}A) = {\sim}t_r(A), \; t_r({\sim}A) = {\sim}t_l(A),$
        \item $t_l(A \circ B) = t_l(A) \circ t_l(B), \; t_r(A \circ B) = t_r(A) \circ t_r(B), \; (\circ \in \{\otimes, \with, \parr, \oplus\}),$
        \item $t_l(A \multimap B) = t_r(A) \multimap t_l(B), \; t_r(A \multimap B) = t_l(A) \multimap t_r(B),$
        \item $t_l(\oc A) = \oc(t_l(A) \with \mathbf{1}), \; t_r(\oc A) = \oc t_r(A),$
        \item $t_l(\wn A) = \wn t_l(A), \; t_r(\wn A) = \wn(t_r(A) \oplus \bot).$
    \end{itemize}
    For a finite multiset $\Gamma$, we write $t_l[\Gamma] = \{t_l(A) \mid A \in \Gamma\}$. Similarly for  $t_r[\Gamma]$.
\end{definition}
\noindent For example, $t_l(\oc(c_i \multimap a)) = \oc((c_i \multimap a) \with \mathbf{1})$. A method to simulate the weakening of $\oc$ through a translation with $\mathbf{1}$ is already suggested by Danos et al. \cite[p. 17]{DanosJoinetSchellinx1993}. 
For our translation, however, we must also show the equivalence of the provability of a sequent and its translation.
\begin{restatable}{proposition}{simulation}\label{Prop for simulation}
    For all $A \in \mathcal{L}$, $t_l(A) \Rightarrow t_r(A)$ is provable in $\mathbf{CLLR}$. 
\end{restatable}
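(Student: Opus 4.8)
The plan is to prove the statement by induction on the structure of the formula $A$, establishing the two claims $t_l(A) \Rightarrow t_r(A)$ and $t_r(A) \Rightarrow t_l(A)$ simultaneously, since the translations $t_l$ and $t_r$ are themselves defined by simultaneous induction and swap roles under negation. The base cases ($p$, $\mathbf{1}$, $\bot$, $\top$, $\mathbf{0}$) are immediate: $t_l$ and $t_r$ agree on every atom and unit, so the required sequents are instances of \textbf{id} or the unit axioms ($[\top r]$, $[\mathbf{0} l]$), all of which survive the removal of weakening.

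For the inductive step, I would treat each connective in turn, in each case assuming as induction hypotheses the derivability in $\mathbf{CLLR}$ of $t_l(B) \Rightarrow t_r(B)$ and $t_r(B) \Rightarrow t_l(B)$ for the immediate subformulas $B$, and then building a derivation of the goal using the corresponding logical rules. The commutative connectives $\circ \in \{\otimes, \with, \parr, \oplus\}$ are routine, as $t_l$ and $t_r$ distribute over them homomorphically; one simply applies the left and right rules for $\circ$ to the induction hypotheses. The negation and implication cases exploit the role-swap: for instance, to derive $t_l({\sim}A) \Rightarrow t_r({\sim}A)$, i.e.\ ${\sim}t_r(A) \Rightarrow {\sim}t_l(A)$, I would start from the hypothesis $t_l(A) \Rightarrow t_r(A)$ and apply $[{\sim}l]$ and $[{\sim}r]$; the $\multimap$ case is analogous, combining $[\multimap l]$ and $[\multimap r]$ with both hypotheses.

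The genuinely interesting cases are the exponentials, where the asymmetry of the translation is designed precisely to compensate for the missing weakening. For $\oc$ I must show $\oc(t_l(A) \with \mathbf{1}) \Rightarrow \oc\, t_r(A)$ and its converse $\oc\, t_r(A) \Rightarrow \oc(t_l(A) \with \mathbf{1})$. In the forward direction I would use $[\oc l]$ to expose $t_l(A) \with \mathbf{1}$, then project with $[\with l_0]$ to recover $t_l(A)$, apply the induction hypothesis $t_l(A) \Rightarrow t_r(A)$, and finally apply $[\oc r]$ (whose premise context $\oc\Gamma \Rightarrow \wn\Delta, A$ is satisfied since the only antecedent formula carries $\oc$). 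For the converse I would similarly strip the $\oc$, use the hypothesis $t_r(A) \Rightarrow t_l(A)$, and then reconstruct $t_l(A) \with \mathbf{1}$ using $[\with r]$, discharging the second conjunct via $[\mathbf{1} r]$ against the empty context that $[\oc r]$ makes available. The $\wn$ cases are dual, using $t_r(\wn A) = \wn(t_r(A) \oplus \bot)$ together with $[\wn l]$, $[\wn r]$, $[\oplus r_i]$, and $[\bot r]$.

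The main obstacle I anticipate is the side condition on $[\oc r]$ (and dually $[\wn l]$): these rules require that \emph{every} formula in the ambient context be prefixed by $\oc$ (respectively $\wn$), so the reductions to the subformula derivations must be arranged so that, at the moment $[\oc r]$ or $[\wn l]$ is applied, the context consists solely of the single $\oc$- or $\wn$-formula being processed. Getting the order of rule applications right—stripping the modality and projecting through $\with$ or $\oplus$ only after $[\oc r]/[\wn l]$ has fixed the promotion context, rather than before—is the delicate point, and it is exactly here that the conjunct $\mathbf{1}$ and the disjunct $\bot$ earn their place by letting $\with$ and $\oplus$ be introduced without demanding any extra resources from the context.
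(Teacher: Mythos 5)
There is a genuine gap: you have strengthened the induction hypothesis to the mutual claim that both $t_l(A) \Rightarrow t_r(A)$ and $t_r(A) \Rightarrow t_l(A)$ are provable in $\mathbf{CLLR}$, but the second claim is false. Take $A = \oc p$: then $t_r(\oc p) = \oc p$ and $t_l(\oc p) = \oc(p \with \mathbf{1})$, and a proof of $\oc p \Rightarrow \oc(p \with \mathbf{1})$ would (after $[\oc r]$ and $[\with r]$, which \emph{shares} the context between its premises) require a proof of $\oc p \Rightarrow \mathbf{1}$. Since $[\mathbf{1} r]$ has an empty antecedent and $\mathbf{CLLR}$ has no $[\oc\mathbf{W}]$, this sequent is unprovable (a routine cut-free proof search, using the cut elimination the paper inherits from Danos et al., confirms this). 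Your proposed step of ``discharging the second conjunct via $[\mathbf{1} r]$ against the empty context that $[\oc r]$ makes available'' is exactly where the argument breaks: $[\oc r]$ does not empty the context, it only requires it to be $\oc$-prefixed, and that $\oc$-formula cannot be discarded. The dual problem occurs in the converse $\wn$ case, where $\bot \Rightarrow \wn t_l(A)$ would need $[\wn\mathbf{W}]$.

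Fortunately the strengthening is also unnecessary, and the forward halves of your case analysis are precisely the paper's proof. The point you miss is that the role swap under ${\sim}$ and $\multimap$ is already compensated by the reversal of the sequent arrow: to prove $t_l({\sim}A) \Rightarrow t_r({\sim}A)$, i.e.\ ${\sim}t_r(A) \Rightarrow {\sim}t_l(A)$, the rules $[{\sim}r]$ and $[{\sim}l]$ applied to the single hypothesis $t_l(A) \Rightarrow t_r(A)$ suffice (and likewise $[\multimap l]$ consumes two premises both of the form $t_l(\cdot) \Rightarrow t_r(\cdot)$). So the induction closes on the one-directional statement alone, which is exactly how the paper proceeds; deleting the converse direction from your plan yields a correct proof.
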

\begin{proof}
    By induction on $A$, we show that $t_l(A) \Rightarrow t_r(A)$ is provable in $\mathbf{CLLR}$.\\
    (\textbf{Base Case}) Let $A \equiv p$, $\mathbf{1}$, $\bot$, $\top$, or $\mathbf{0}$. Since $t_l(A) = t_r(A)$, it suffices to apply \textbf{id}.\\
    (\textbf{Inductive Step}) Due to space limitations, only three cases are shown below.
    \begin{enumerate}
        \item Let $A \equiv {\sim}B$. We show that ${\sim}t_r(B) \Rightarrow {\sim}t_l(B)$ is provable. By induction hypothesis, there is a proof $\mathcal{D}_0$ of $t_l(B) \Rightarrow t_r(B)$. Then we proceed as follows:
        \[
            \begin{bprooftree}
                \AxiomC{$\mathcal{D}_0$}
                \noLine
                \UnaryInfC{$t_l(B) \Rightarrow t_r(B)$}
                \RightLabel{[${\sim}r$]}
                \UnaryInfC{$\Rightarrow {\sim}t_l(B), t_r(B)$}
                \RightLabel{[${\sim}l$]}
                \UnaryInfC{${\sim}t_r(B) \Rightarrow {\sim}t_l(B),$}
            \end{bprooftree}
        \]
        \item Let $A \equiv B \multimap C$. We show that $t_r(B) \multimap t_l(C) \Rightarrow t_l(B) \multimap t_r(C)$ is provable. By induction hypothesis, let $\mathcal{D}_0$ be a proof of $t_l(B) \Rightarrow t_r(B)$ and $\mathcal{D}_{1}$ be a proof of $t_l(C) \Rightarrow t_r(C)$. Then we proceed as follows:
        \[
            \begin{bprooftree}
                \AxiomC{$\mathcal{D}_1$}
                \noLine
                \UnaryInfC{$t_l(B) \Rightarrow t_r(B)$}
                \AxiomC{$\mathcal{D}_2$}
                \noLine
                \UnaryInfC{$t_l(C) \Rightarrow t_r(C)$}
                \RightLabel{[$\multimap l$]}
                \BinaryInfC{$t_r(B) \multimap t_l(C), t_l(B) \Rightarrow t_r(C)$}
                \RightLabel{[$\multimap r$]}
                \UnaryInfC{$t_r(B) \multimap t_l(C) \Rightarrow t_l(B) \multimap t_r(C),$}
            \end{bprooftree}
        \]
        \item Let $A \equiv \oc B$. We show that $\oc (t_l(B) \with \mathbf{1}) \Rightarrow \oc t_r(B)$ is provable. By induction hypothesis, there is a proof $\mathcal{D}_0$ of $t_l(B) \Rightarrow t_r(B)$. Then we proceed as follows:
        \[
            \begin{bprooftree}
                \AxiomC{$\mathcal{D}_0$}
                \noLine
                \UnaryInfC{$t_l(B) \Rightarrow t_r(B)$}
                \RightLabel{[$\with l_0$]}
                \UnaryInfC{$t_l(B) \with \mathbf{1} \Rightarrow t_r(B)$}
                \RightLabel{[$\oc l$]}
                \UnaryInfC{$\oc(t_l(B) \with \mathbf{1}) \Rightarrow t_r(B)$}
                \RightLabel{[$\oc r$]}
                \UnaryInfC{$\oc(t_l(B) \with \mathbf{1}) \Rightarrow \oc t_r(B)$} 
            \end{bprooftree}
        \]
    \end{enumerate}
\end{proof}
\begin{lemma}\label{simulateCLLbyCLLR1}
    Let $\Gamma \Rightarrow \Delta$ be a sequent of $\mathcal{L}$. If $\Gamma \Rightarrow \Delta$ is provable in $\mathbf{CLL}$, then $t_l[\Gamma] \Rightarrow t_r[\Delta]$ is provable in $\mathbf{CLLR}$.
\end{lemma}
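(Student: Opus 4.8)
The plan is to argue by induction on a cut-free $\mathbf{CLL}$-derivation of $\Gamma \Rightarrow \Delta$. Since $\mathbf{CLL}$ enjoys cut elimination (Girard \cite{Girard1987}), provability of $\Gamma \Rightarrow \Delta$ in $\mathbf{CLL}$ yields a cut-free derivation, and I would induct on its height. For the base case, the identity axiom $A \Rightarrow A$ translates to $t_l(A) \Rightarrow t_r(A)$, which is provable in $\mathbf{CLLR}$ by Proposition \ref{Prop for simulation}; this is precisely where that proposition is needed, since $t_l(A)$ and $t_r(A)$ differ in general.

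For the inductive step I would run through each rule of $\mathbf{CLL}$ and check that the translated conclusion follows in $\mathbf{CLLR}$ from the translated premises. For the multiplicative and additive connectives $\otimes, \parr, \with, \oplus$ and for the units $\mathbf{1}, \bot, \top, \mathbf{0}$, the translations commute with the connective, and the corresponding rules (including $[\mathbf{1}l], [\mathbf{1}r], [\bot l], [\bot r]$) are still present in $\mathbf{CLLR}$, so the same rule applies verbatim; for $\sim$ and $\multimap$ the definitions of $t_l$ and $t_r$ swap the two translations across the turnstile exactly so that, e.g., $[\multimap l]$ and $[\multimap r]$ still match. For the exponential logical rules, the key observation is that $t_l$ sends every $\oc$-headed formula to a $\oc$-headed formula and $t_r$ sends every $\wn$-headed formula to a $\wn$-headed formula, so the side conditions of the promotion rules $[\oc r]$ and $[\wn l]$ are preserved and they apply directly; the rules $[\oc l]$ and $[\wn r]$ require one preliminary $[\with l_0]$ (resp. $[\oplus r_0]$) to insert the extra $\with \mathbf{1}$ (resp. $\oplus \bot$) before applying $[\oc l]$ (resp. $[\wn r]$). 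The contraction rules $[\oc \mathbf{C}]$ and $[\wn \mathbf{C}]$ survive in $\mathbf{CLLR}$ and apply unchanged.

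The crucial cases are the weakening rules, which $\mathbf{CLLR}$ lacks and which the translation is engineered to recover. For $[\oc \mathbf{W}]$, given a $\mathbf{CLLR}$-proof of $t_l[\Gamma] \Rightarrow t_r[\Delta]$, I would apply $[\mathbf{1}l]$, then $[\with l_1]$ (selecting the $\mathbf{1}$ component of $t_l(A) \with \mathbf{1}$), then $[\oc l]$, arriving at $\oc(t_l(A) \with \mathbf{1}), t_l[\Gamma] \Rightarrow t_r[\Delta]$, which is the translation of $\oc A, \Gamma \Rightarrow \Delta$; the rule $[\wn \mathbf{W}]$ is handled dually by $[\bot r]$, $[\oplus r_1]$, and $[\wn r]$. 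The step I expect to be the genuine obstacle is the cut rule, and this is exactly what forces the reduction to cut-free derivations: a cut on $A$ produces $t_r(A)$ in the succedent of the left premise but consumes $t_l(A)$ in the antecedent of the right premise, and gluing these by a single cut in $\mathbf{CLLR}$ would require $t_r(A) \Rightarrow t_l(A)$, the converse of Proposition \ref{Prop for simulation}, which fails in $\mathbf{CLLR}$ in general (already for $A = \oc p$, since $\oc p \Rightarrow \oc(p \with \mathbf{1})$ would need the $\mathbf{CLLR}$-unprovable $p \Rightarrow \mathbf{1}$). Passing to cut-free derivations removes this case entirely, so no such bridging is ever needed.
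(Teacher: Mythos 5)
Your proposal is correct and follows essentially the same route as the paper: induction on the $\mathbf{CLL}$-derivation, with Proposition \ref{Prop for simulation} covering the identity axiom and the $[\mathbf{1}l]$--$[\with l]$--$[\oc l]$ sequence (dually $[\bot r]$--$[\oplus r]$--$[\wn r]$) recovering the erased weakening rules, all other rules commuting with the translations. The one point where you go beyond the paper is the cut rule: the paper's induction is stated over arbitrary derivations and passes over Cut in silence, whereas you correctly observe that the translated premises carry mismatched cut formulas $t_r(A)$ and $t_l(A)$, that the bridging sequent $t_r(A) \Rightarrow t_l(A)$ is not $\mathbf{CLLR}$-provable in general (e.g.\ for $A = \oc p$), and that one must therefore first pass to a cut-free $\mathbf{CLL}$-derivation---a genuine and welcome refinement of the same argument rather than a different method.
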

\begin{proof}
    By induction on $\mathcal{D}$ of $\mathbf{CLL}$, we show that if $\textrm{root}(\mathcal{D}) = \Gamma \Rightarrow \Delta$ then $t_l[\Gamma] \Rightarrow t_r[\Delta]$ is provable in $\mathbf{CLLR}$. \\
    (\textbf{Base Case}) Since the translation does not change the units, the only case to consider is that of the $\mathbf{id}$ axiom. Let $\mathcal{D}$ be the $\mathbf{id}$ axiom. We need to show that $t_l(A) \Rightarrow t_r(A)$ is provable in $\mathbf{CLLR}$, but it is clear by Proposition \ref{Prop for simulation}.\\
    (\textbf{Inductive Step}) A crucial case is where the last applied rule is $[!\mathbf{W}]$:
    \[
    \mathcal{D} \equiv
    \begin{bprooftree}
        \AxiomC{$\mathcal{D}_0$}
        \noLine
        \UnaryInfC{$\Gamma \Rightarrow \Delta$}
        \RightLabel{[$\oc \mathbf{W}$]}
        \UnaryInfC{$\oc A, \Gamma \Rightarrow \Delta.$}
    \end{bprooftree}
    \]
    Since $t_l(\oc A) = \oc(t_l(A) \with \mathbf{1})$, we need to show that $\oc(t_l(A) \with \mathbf{1}), t_l[\Gamma] \Rightarrow t_r[\Delta]$ has a $\mathbf{CLLR}$ proof. 
    By induction hypothesis, there exists a proof in $\mathbf{CLLR}$ $\mathcal{D}'_0$ of $t_l[\Gamma] \Rightarrow t_r[\Delta]$. Then we proceed as follows:
    \[
    \begin{bprooftree}
        \AxiomC{$\mathcal{D}'_0$}
        \noLine
        \UnaryInfC{$t_l[\Gamma] \Rightarrow t_r[\Delta]$}
        \RightLabel{[$\mathbf{1}l$]}
        \UnaryInfC{$\mathbf{1}, t_l[\Gamma] \Rightarrow t_r[\Delta]$}
        \RightLabel{[$\with l_0$]}
        \UnaryInfC{$t_l(A) \with \mathbf{1}, t_l[\Gamma] \Rightarrow t_r[\Delta]$}
        \RightLabel{[$\oc l$]}
        \UnaryInfC{$\oc(t_l(A) \with \mathbf{1}), t_l[\Gamma] \Rightarrow t_r[\Delta].$}
    \end{bprooftree}
    \]
\end{proof}
In what follows, we are going to establish the converse of Lemma \ref{simulateCLLbyCLLR1}. 
\begin{proposition}\label{equivalence}
    For all $A \in \mathcal{L}$, both $A \Rightarrow t_l(A)$ and $t_r(A) \Rightarrow A$ are provable in $\mathbf{CLL}$.
\end{proposition}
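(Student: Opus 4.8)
The plan is to prove both claims \emph{simultaneously} by induction on the structure of $A$, since the defining clauses of $t_l$ and $t_r$ are mutually recursive (e.g.\ $t_l({\sim}B) = {\sim}t_r(B)$) and the two statements feed into each other across the polarity-reversing connectives. For the base case $A \in \{p, \mathbf{1}, \bot, \top, \mathbf{0}\}$ both translations are the identity, so $A \Rightarrow t_l(A)$ and $t_r(A) \Rightarrow A$ are instances of $\mathbf{id}$.

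For the inductive step, the multiplicative and additive connectives $\otimes$, $\parr$, $\with$, $\oplus$ are routine: one feeds the induction hypotheses $B \Rightarrow t_l(B)$, $C \Rightarrow t_l(C)$ (resp.\ $t_r(B) \Rightarrow B$, $t_r(C) \Rightarrow C$) into the matching right/left rule pair. The polarity-reversing connectives ${\sim}$ and $\multimap$ require crossing the two statements: to derive ${\sim}B \Rightarrow {\sim}t_r(B)$ I would start from $t_r(B) \Rightarrow B$ and apply $[{\sim}r]$ then $[{\sim}l]$; and to derive $B \multimap C \Rightarrow t_r(B) \multimap t_l(C)$ I would combine $t_r(B) \Rightarrow B$ with $C \Rightarrow t_l(C)$ via $[\multimap l]$ followed by $[\multimap r]$, with the dual constructions for the $t_r(A) \Rightarrow A$ direction.

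The crucial cases are the exponentials, and this is where the weakening rules of $\mathbf{CLL}$ become indispensable. For $A \equiv \oc B$ I must show $\oc B \Rightarrow \oc(t_l(B) \with \mathbf{1})$. The hypothesis $B \Rightarrow t_l(B)$ yields $\oc B \Rightarrow t_l(B)$ by $[\oc l]$, but to produce the conjunct $\mathbf{1}$ I need $\oc B \Rightarrow \mathbf{1}$, which is obtained from the axiom $\Rightarrow \mathbf{1}$ precisely by the weakening rule $[\oc \mathbf{W}]$; combining the two with $[\with r]$ gives $\oc B \Rightarrow t_l(B) \with \mathbf{1}$, and a final $[\oc r]$ (applicable since the antecedent is the single $\oc$-formula $\oc B$) gives the goal. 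Dually, for $A \equiv \wn B$ the direction $t_r(\wn B) = \wn(t_r(B) \oplus \bot) \Rightarrow \wn B$ reduces, via $[\wn l]$ and $[\oplus l]$, to the two sequents $t_r(B) \Rightarrow \wn B$ (immediate from the hypothesis and $[\wn r]$) and $\bot \Rightarrow \wn B$, the latter being derived from $[\bot l]$ by the weakening rule $[\wn \mathbf{W}]$. The remaining exponential directions, $\oc t_r(B) \Rightarrow \oc B$ and $\wn B \Rightarrow \wn t_l(B)$, need only $[\oc l]/[\oc r]$ and $[\wn r]/[\wn l]$ and use no weakening.

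I expect the exponential cases to be the only real obstacle: they are exactly the clauses in which the translation inserts the extra $\with \mathbf{1}$ (resp.\ $\oplus \bot$), and discharging these additions forces the use of $[\oc \mathbf{W}]$ and $[\wn \mathbf{W}]$. This is precisely why the statement is asserted over $\mathbf{CLL}$ rather than $\mathbf{CLLR}$, mirroring in the opposite direction the role that $\with \mathbf{1}$ played in Proposition \ref{Prop for simulation}. One should also verify that each application of the promotion rules $[\oc r]$ and $[\wn l]$ respects its side condition (only $\oc$-formulas in the antecedent, only $\wn$-formulas in the succedent), which holds here because the relevant contexts are always either empty or a single exponential formula.
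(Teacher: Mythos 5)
Your proposal is correct and takes essentially the same route as the paper: a simultaneous induction on $A$, with the $\oc B$ case handled by exactly the same derivation (obtaining $\oc B \Rightarrow \mathbf{1}$ via $[\oc\mathbf{W}]$ and combining with $[\with r]$ and $[\oc r]$). The paper only writes out the $\oc$ case explicitly, so your dual treatment of $\wn B$ via $[\bot l]$ and $[\wn\mathbf{W}]$ and your check of the promotion side conditions are correct elaborations of what the paper leaves implicit.
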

\begin{proof}
    We show the two statements by simultaneous induction on $A$. The crucial case is where $A \equiv \oc B$. Since $t_l(\oc B) = \oc (t_l(B) \with \textbf{1})$ and $t_r(\oc B) = \oc t_r(B)$, we need to show that both $\oc B \Rightarrow \oc (t_l(B) \with \textbf{1})$ and $\oc t_r(B) \Rightarrow \oc B$ are provable in  $\mathbf{CLL}$. By induction hypothesis, there are a proof $\mathcal{D}_0$ of $\mathbf{CLL}$ whose root is $B \Rightarrow t_l(B)$ and $\mathcal{D}_1$ whose root is $t_r(B) \Rightarrow B$. Then we proceed as follows:
    \[
    \begin{bprooftree}
        \AxiomC{$\mathcal{D}_0$}
        \noLine
        \UnaryInfC{$B \Rightarrow t_l(B)$}
        \RightLabel{[$\oc l$]}
        \UnaryInfC{$\oc B \Rightarrow t_l(B)$}
        \AxiomC{}
        \RightLabel{[$\mathbf{1}r$]}
        \UnaryInfC{$\Rightarrow \mathbf{1}$}
        \RightLabel{[$\oc\mathbf{W}$]}
        \UnaryInfC{$\oc B \Rightarrow \mathbf{1}$}
        \RightLabel{[$\with r$]}
        \BinaryInfC{$\oc B \Rightarrow t_l(B) \with \mathbf{1}$}
        \RightLabel{[$\oc r$]}
        \UnaryInfC{$\oc B \Rightarrow \oc(t_l(B) \with \mathbf{1}),$}
    \end{bprooftree}
    \begin{bprooftree}
        \AxiomC{$\mathcal{D}_1$}
        \noLine
        \UnaryInfC{$t_r(B) \Rightarrow B$}
        \RightLabel{[$\oc l$]}
        \UnaryInfC{$\oc t_r(B) \Rightarrow B$}
        \RightLabel{[$\oc r$]}
        \UnaryInfC{$\oc t_r(B) \Rightarrow \oc B.$}
    \end{bprooftree}
    \]
\end{proof}

\begin{lemma}\label{simulateCLLbyCLLR2}
    Let $\Gamma \Rightarrow \Delta$ be a sequent of $\mathcal{L}$. If $t_l[\Gamma] \Rightarrow t_r[\Delta]$ is provable in $\mathbf{CLLR}$, then $\Gamma \Rightarrow \Delta$ is provable in $\mathbf{CLL}$.
\end{lemma}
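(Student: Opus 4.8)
The plan is to exploit the fact that $\mathbf{CLLR}$ is literally a subsystem of $\mathbf{CLL}$ — it is obtained merely by deleting the rules $[\oc\mathbf{W}]$ and $[\wn\mathbf{W}]$ — so that any $\mathbf{CLLR}$-proof is already a $\mathbf{CLL}$-proof, and then to use the $Cut$ rule of $\mathbf{CLL}$ together with Proposition \ref{equivalence} to ``undo'' the translation. Concretely, the hypothesis that $t_l[\Gamma] \Rightarrow t_r[\Delta]$ is provable in $\mathbf{CLLR}$ immediately yields a $\mathbf{CLL}$-proof of the same sequent, while Proposition \ref{equivalence} supplies, for each $A \in \Gamma$, a $\mathbf{CLL}$-proof of $A \Rightarrow t_l(A)$, and for each $B \in \Delta$, a $\mathbf{CLL}$-proof of $t_r(B) \Rightarrow B$. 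Since $Cut$ is a primitive rule of $\mathbf{CLL}$ (Table \ref{table:CLL}), I can splice these proofs together.

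First I would write $\Gamma = A_1, \ldots, A_m$ and $\Delta = B_1, \ldots, B_n$ as multisets, so that $t_l[\Gamma] = t_l(A_1), \ldots, t_l(A_m)$ and $t_r[\Delta] = t_r(B_1), \ldots, t_r(B_n)$. Starting from the $\mathbf{CLL}$-proof of $t_l(A_1), \ldots, t_l(A_m) \Rightarrow t_r(B_1), \ldots, t_r(B_n)$, I would apply $Cut$ with left premise $A_i \Rightarrow t_l(A_i)$ once for each $i$, each cut erasing one occurrence of $t_l(A_i)$ from the antecedent and introducing $A_i$ in its place. After $m$ cuts the antecedent reads $A_1, \ldots, A_m = \Gamma$, and the sequent has become $\Gamma \Rightarrow t_r(B_1), \ldots, t_r(B_n)$. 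Then I would dually apply $Cut$ with right premise $t_r(B_j) \Rightarrow B_j$ once for each $j$, each cut erasing one occurrence of $t_r(B_j)$ from the succedent and introducing $B_j$. After $n$ such cuts the sequent becomes $\Gamma \Rightarrow \Delta$, and the whole tree is a $\mathbf{CLL}$-proof, as required.

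I do not expect a serious obstacle here: the argument is essentially a cut-based ``round trip,'' and the real content sits in Proposition \ref{equivalence}, which has already been established. The only point demanding care is the multiset bookkeeping, since repeated formulas in $\Gamma$ or $\Delta$ force me to apply $Cut$ once per occurrence — the $Cut$ rule removes a single occurrence at a time — rather than once per distinct formula. A clean way to present this is by a short induction on the cardinality of $\Gamma$ (and then of $\Delta$), peeling off one formula at a time; alternatively I can simply observe that the $m + n$ cuts may be performed in any order, since each acts on a distinct occurrence. It is worth emphasizing that the construction relies essentially on $Cut$ being available in the target system $\mathbf{CLL}$, which it is as a primitive rule, and that no weakening is needed anywhere, so the argument stays entirely within the rules of $\mathbf{CLL}$.
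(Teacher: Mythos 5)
Your proposal is correct and follows essentially the same route as the paper: observe that a $\mathbf{CLLR}$-proof is already a $\mathbf{CLL}$-proof, then use Proposition \ref{equivalence} and finitely many applications of $Cut$ to replace each $t_l(A_i)$ by $A_i$ and each $t_r(B_j)$ by $B_j$. Your write-up is in fact more explicit than the paper's about the per-occurrence cut bookkeeping, but the underlying argument is identical.
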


\begin{proof}
    Suppose that $t_l[\Gamma] \Rightarrow t_r[\Delta]$ has a proof $\mathcal{D}$ in $\mathbf{CLLR}$. Since the axiom and the rules of $\mathbf{CLLR}$ are also those of $\mathbf{CLL}$, $\mathcal{D}$ is also a proof in $\mathbf{CLL}$. We can construct a proof of $\Gamma \Rightarrow \Delta$ in $\mathbf{CLL}$ as follows. 
    By Proposition \ref{equivalence}, we apply the cut rule finitely many times to obtain a proof of $\Gamma \Rightarrow \Delta$ in $\mathbf{CLL}$.
\end{proof}

By Lemmas \ref{simulateCLLbyCLLR1} and \ref{simulateCLLbyCLLR2},  $\mathbf{CLLR}$ can simulate $\mathbf{CLL}$ in the following sense.
\begin{corollary}
    Let $\Gamma \Rightarrow \Delta$ be a sequent of $\mathcal{L}$. Then  $t_l[\Gamma] \Rightarrow t_r[\Delta]$ is provable in $\mathbf{CLLR}$ iff $\Gamma \Rightarrow \Delta$ is provable in $\mathbf{CLL}$.
\end{corollary}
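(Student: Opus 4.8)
The plan is to read off this equivalence directly from the two lemmas already established, since the corollary is nothing more than their conjunction; no fresh proof-theoretic work is required. The biconditional to be shown is that $t_l[\Gamma] \Rightarrow t_r[\Delta]$ is provable in $\mathbf{CLLR}$ exactly when $\Gamma \Rightarrow \Delta$ is provable in $\mathbf{CLL}$, and each direction is precisely the statement of one of the lemmas.

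First I would dispatch the ``if'' half. Assuming $\Gamma \Rightarrow \Delta$ is provable in $\mathbf{CLL}$, Lemma \ref{simulateCLLbyCLLR1} immediately supplies a $\mathbf{CLLR}$ proof of $t_l[\Gamma] \Rightarrow t_r[\Delta]$, which is exactly what is claimed. For the ``only if'' half I would assume $t_l[\Gamma] \Rightarrow t_r[\Delta]$ is provable in $\mathbf{CLLR}$ and apply Lemma \ref{simulateCLLbyCLLR2}, which returns a $\mathbf{CLL}$ proof of $\Gamma \Rightarrow \Delta$. Chaining the two implications yields the stated equivalence.

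There is no genuine obstacle at the level of the corollary itself: all the substantive content has already been absorbed into the lemmas, which are merely asymmetric in difficulty. Lemma \ref{simulateCLLbyCLLR1} carries the real idea, namely simulating the missing $[\oc\mathbf{W}]$ rule by exploiting the rewriting $t_l(\oc A) = \oc(t_l(A)\with\mathbf{1})$ and discharging the inserted $\mathbf{1}$ through $[\mathbf{1}l]$ followed by $[\with l_0]$; Lemma \ref{simulateCLLbyCLLR2} instead leans on the round-trip provabilities of Proposition \ref{equivalence} together with finitely many applications of $Cut$. Once these are granted, assembling the corollary is a one-line deduction.
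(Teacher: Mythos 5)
Your proposal is correct and matches the paper exactly: the corollary is stated there as an immediate consequence of Lemmas \ref{simulateCLLbyCLLR1} and \ref{simulateCLLbyCLLR2}, with one lemma supplying each direction of the biconditional, precisely as you describe.
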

Hence, we can show the undecidability of $\mathbf{CLLR}$ by reducing it to the undecidability of $\mathbf{CLL}$ (Lincoln et al. \cite[Theorem 11]{Lincoln1992}).
\begin{theorem}\label{Main Theorem 1}
    The problem of whether a sequent is provable in $\mathbf{CLLR}$ is undecidable.
\end{theorem}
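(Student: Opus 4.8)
The plan is to establish undecidability of $\mathbf{CLLR}$-provability by exhibiting an effective \emph{many-one reduction} from the provability problem of $\mathbf{CLL}$, which is known to be undecidable by Lincoln et al.\ \cite{Lincoln1992}, to that of $\mathbf{CLLR}$. Essentially all of the substantive work has already been done in Lemmas \ref{simulateCLLbyCLLR1} and \ref{simulateCLLbyCLLR2}, so the remaining task is to package these into a reduction and to observe that the translation is effective.

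First I would note that the map sending a sequent $\Gamma \Rightarrow \Delta$ to its translation $t_l[\Gamma] \Rightarrow t_r[\Delta]$ is computable: the translations $t_l$ and $t_r$ are defined by a straightforward simultaneous structural recursion on formulas, so each $t_l(A)$ and $t_r(A)$ can be computed in finitely many steps, and a sequent is translated by applying these formula-wise to its antecedent and succedent. Hence there is an algorithm that, given any sequent of $\mathcal{L}$, outputs its translation.

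Next, combining Lemmas \ref{simulateCLLbyCLLR1} and \ref{simulateCLLbyCLLR2} (equivalently, invoking the Corollary above), we have that $t_l[\Gamma] \Rightarrow t_r[\Delta]$ is provable in $\mathbf{CLLR}$ if and only if $\Gamma \Rightarrow \Delta$ is provable in $\mathbf{CLL}$. This is precisely the correctness condition for the reduction, guaranteeing that the translated instance has the same yes/no answer as the original.

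Finally I would argue by contradiction. Suppose $\mathbf{CLLR}$-provability were decidable, witnessed by some decision procedure $P$. Then, given an arbitrary sequent $\Gamma \Rightarrow \Delta$, one could first compute $t_l[\Gamma] \Rightarrow t_r[\Delta]$ and then run $P$ on it; by the equivalence just stated, the output correctly decides whether $\Gamma \Rightarrow \Delta$ is provable in $\mathbf{CLL}$. This would yield a decision procedure for $\mathbf{CLL}$-provability, contradicting its undecidability. Therefore $\mathbf{CLLR}$-provability is undecidable. I do not expect any genuine obstacle at this stage: the only point requiring care is the effectiveness of the translation, which is immediate from its recursive definition, while the logical content, namely the preservation and reflection of provability, is exactly what the preceding lemmas supply.
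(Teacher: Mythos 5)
Your proposal is correct and follows exactly the paper's route: the paper derives Theorem \ref{Main Theorem 1} immediately from the corollary combining Lemmas \ref{simulateCLLbyCLLR1} and \ref{simulateCLLbyCLLR2}, i.e., by the same many-one reduction from $\mathbf{CLL}$-provability via the computable translation $(\Gamma \Rightarrow \Delta) \mapsto (t_l[\Gamma] \Rightarrow t_r[\Delta])$. Your write-up merely makes explicit the effectiveness of the translation and the contradiction argument, which the paper leaves implicit.
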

\section{Undecidability of CLLRR}\label{Undecidability of CLLRR}
This section establishes the undecidability of $\mathbf{CLLRR}$.
\begin{definition} 
    The syntax $\mathcal{L}^-$ of $\mathbf{CLLRR}$ is obtained by omitting $\mathbf{1}$ and $\bot$ from $\mathcal{L}$. 
    Define a system $\mathbf{CLLRR}$ of the syntax $\mathcal{L}^{-}$ as the resulting system obtained from $\mathbf{CLLR}$ by excluding the rules of $\mathbf{1}$ and $\bot$ of Table \ref{table:CLL}. 
\end{definition}
Since our reduction argument via two translations $t_{l}$ and $t_{r}$ for $\mathbf{CLLR}$ does not work for $\mathbf{CLLRR}$ due to the absence of the units $\mathbf{1}$ and $\bot$, this section establishes the undecidability of $\mathbf{CLLRR}$ in terms of two-counter machines \cite{Minsky1961}.

\subsection{Two-counter Machine}
We employ a formulation from Lafont \cite{Lafont1996}. A \emph{two-counter machine} $M$ consists of a finite set $S$ of states, the terminal state $s_t \in S$ and a function
\[
    \tau \colon S\backslash \{s_t\} \to (\{+\} \times \{A, B\} \times S) \cup (\{-\} \times \{A, B\} \times S \times S).
\]
An element of $S \times \mathbb{N} \times \mathbb{N}$ is said to be an \emph{instantaneous description} (\emph{ID}), which means a state and values of the two {\em counters}. For a given ID $(s_j, p, q)$, $\tau(s_j)$ represents a program that commands a transition from one ID to the next, which can take one of the following four forms: $(+, A, s_k), (-, A, s_k, s_l), (+, B, s_k), (-, B, s_k, s_l)$. The symbols ``$+$'' and ``$-$'' refer to increment and decrement commands respectively. The symbols ``$A$'' and ``$B$'' indicate which of the first and second counters should be increased or decreased.

Transitions of IDs by programs are as follows:
\begin{itemize}
    \item if $\tau(s_j) = (+, A, s_k)$, then $(s_j, p, q) \leadsto (s_k, p+1, q)$,
    \item if $\tau(s_j) = (-, A, s_k, s_l)$,
    \begin{itemize}
        \item if $p > 0$, then $(s_j, p, q) \leadsto (s_k, p-1, q)$, and
        \item if $p = 0$, then  $(s_j, p, q) \leadsto (s_l, p, q)$,
    \end{itemize}
    \item if $\tau(s_j) = (+, B, s_k)$, then $(s_j, p, q) \leadsto (s_k, p, q+1)$,
    \item if $\tau(s_j) = (-, B, s_k, s_l)$,
    \begin{itemize}
        \item if $q > 0$, then $(s_j, p, q) \leadsto (s_k, p, q-1)$, and        \item if $q = 0$, then  $(s_j, p, q) \leadsto (s_l, p, q)$,
    \end{itemize}
\end{itemize}
Provided an ID $(s_i, p, q)$, $M$ computes sequentially, starting with the program $\tau(s_i)$ corresponding to state $s_i$. Since there is no program corresponding to state $s_t$, the computation terminates when it reaches state $s_t$. We call an \emph{accepted sequence} of $(s_i, p, q)$ in $M$ a finite sequence of IDs
\[
    (s_0, p_0, q_0), (s_1, p_1, q_1), \ldots, (s_n, p_n, q_n),
\]
such that for all $s_k \in S\backslash\{s_t\}$, $\tau(s_{k-1})$ is a program that makes $(s_{k-1}, p_{k-1}, q_{k-1})$ transition to $(s_k, p_k, q_k)$, and $(s_0, p_0, q_0) = (s_i, p, q)$, $(s_n, p_n, q_n) = (s_t, 0, 0)$. Since the computation is deterministic, an accepted sequence is unique if it exists. An ID $(s_i, p, q)$ is \textit{accepted} by $M$ if there is the accepted sequence of $(s_i, p, q)$ in $M$.

Minsky \cite[Theorem Ia]{Minsky1961} establishes the existence of a machine whose acceptability problem is undecidable\footnote{
Minsky defines a slightly different machine from ours so that, regardless of the values of the counters, an ID is accepted when the machine reaches the terminal state. However, even if an ID is accepted only when it becomes $(s_t, 0, 0)$ as in this paper, we can still show the undecidability since the following holds:
\begin{quote}
    for any two-counter machine $M = (S, s_t, \tau)$, there is $M' = (S', s'_t, \tau')$ such that the computation in $M$ of input $(s_i, p, q)$ ends with $(s_t, r, s)$ for some $r, s \in \mathbb{N}$ iff $(s_i, p, q)$ is accepted by $M'$.
\end{quote}
Indeed, the following $M'$ satisfies this equivalence:
\begin{itemize}
    \item $S' = (S \times \{0\}) \cup (\{s'_0, s'_1\} \times \{1\})$, 
    \item $s'_t = s_t$,
    \item $\tau'$ is defined by
    \begin{itemize}
        \item if $s_i \in S \times \{0\}$, $\tau'(s_i)$ is obtained by replacing $s_t$'s in $\tau(s_i)$ with $s'_0$,
        \item $\tau'(s'_0, 1) = (-, A, (s'_0, 1), (s'_1, 1))$, and $\tau'(s'_1, 1) = (-, B, (s'_1, 1), s_t)$.
    \end{itemize}
\end{itemize}

Proof of the equivalence is as follows:

($\Longrightarrow$) Assume the left side of the equivalence. There exists the computation sequence of $(s_i, p, q)$ in $M'$. By carefully examining the definition of $\tau'$, we can show the right side. The program $\tau'(s'_0, 1)$ runs until the value of the first counter becomes $0$, at which point the state transitions to $(s'_1, 1)$. Next, the program $\tau'(s'_1, 1)$ runs until the value of the second counter becomes $0$. When the values of both counters reach to $0$, the state transitions to $0$ and the ID is accepted by $M'$.

($\Longleftarrow$) The contraposition clearly holds.
}.

\begin{restatable}{fact}{Und}\label{Und2cm}
    There exists a two-counter machine $M$ such that the problem of whether an input is accepted by $M$ is undecidable.
\end{restatable}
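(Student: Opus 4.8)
The plan is to reduce from Minsky's undecidability result rather than to build a machine from scratch. Minsky \cite{Minsky1961} exhibits a two-counter machine whose \emph{halting} problem is undecidable: under his convention an input is accepted as soon as the computation reaches the terminal state, with no constraint on the counter values. Our acceptance condition is stricter, since it requires the computation to reach precisely $(s_t, 0, 0)$, so the real task is to transfer Minsky's undecidability across this difference in acceptance conditions.

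First I would fix a machine $M = (S, s_t, \tau)$ supplied by Minsky for which it is undecidable whether a given input $(s_i, p, q)$ reaches the state $s_t$. The core step is then to construct, uniformly and computably, a machine $M'$ in our formulation such that $M$ halts on $(s_i, p, q)$ (reaching $s_t$ with arbitrary counter values) if and only if $(s_i, p, q)$ is accepted by $M'$ (reaching $(s_t, 0, 0)$). The construction, spelled out in the footnote above, appends a \emph{clean-up phase} after Minsky's terminal state: every occurrence of $s_t$ in the original programs is redirected to a fresh state that first decrements counter $A$ down to $0$ and then decrements counter $B$ down to $0$ before entering the genuine terminal state. Concretely, with $S' = (S \times \{0\}) \cup (\{s'_0, s'_1\} \times \{1\})$ and $\tau'(s'_0, 1) = (-, A, (s'_0, 1), (s'_1, 1))$, $\tau'(s'_1, 1) = (-, B, (s'_1, 1), s_t)$, the clean-up loop empties both counters in finitely many steps because their values are natural numbers.

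The main work, and the only genuine obstacle, is verifying the stated equivalence, which I would split into its two directions. For the forward direction, assuming $M$ reaches $s_t$ on the input, the computation of $M'$ faithfully mimics $M$ (via the tag $0$) until it enters the clean-up phase at $(s'_0, 1)$; there the two decrement loops terminate and drive the ID to $(s_t, 0, 0)$, so $(s_i, p, q)$ is accepted by $M'$. The backward direction is handled by contraposition: if $M$ never reaches $s_t$, then $M'$ never leaves the tagged copy of $M$'s computation and hence never reaches its terminal state, so the input is not accepted. Since the map $M \mapsto M'$ is a computable transformation preserving the relevant acceptance behaviour, the undecidability of halting for $M$ yields the undecidability of acceptance for $M'$, and $M'$ is the desired witness.
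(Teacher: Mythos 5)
Your proposal is correct and follows essentially the same route as the paper: the paper's justification (given in the footnote to Fact \ref{Und2cm}) likewise takes Minsky's machine with the ``reach $s_t$ with arbitrary counters'' acceptance condition and appends exactly the same clean-up phase, redirecting $s_t$ to a fresh state that drains counter $A$ and then counter $B$ before entering the genuine terminal state, with the equivalence argued in the same two directions (termination of the decrement loops forward, contraposition backward). No discrepancies to report.
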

\subsection{Proof of the Undecidability of CLLRR}
Our method to show the undecidability uses phase semantics and is based on Lafont \cite{Lafont1996}, which established the undecidability of second-order multiplicative additive linear logic. Using phase semantics allows us to avoid a combinatorial argument of translating proofs into computations with lots of case distinctions, as in Lincoln et al \cite{Lincoln1992}. Although Lafont's argument is clear, it is not immediately obvious whether it can be applied to $\mathbf{CLLRR}$. We need to carefully translate programs of a two-counter machine into formulas.

We define the formula that is a translation of programs of a two-counter machine. Given a finite set $S = \{s_t, s_1, \ldots, s_n\}$ of states, we stipulate that  $c_t, c_1, \ldots, c_n$ are propositional variables corresponding to $s_t, s_1, \ldots, s_n$, respectively. 
\begin{definition}
\label{dfn:translation}
    Let $M = (S, s_t, \tau)$ be a two-counter machine. Fix propositional variables $a, b, a', b'$. We write $\theta_M$ for the formula obtained by connecting with $\with$ the set of the following formulas corresponding to the programs of $M$ and further four others:
    \begin{enumerate}
        \item for $\tau(s_j) = (+, A, s_k)$: $c_j \multimap c_k \otimes a$,
        \item for $\tau(s_j) = (-, A, s_k, s_l)$: $c_j \otimes a \multimap c_k$ and $c_j \multimap c_l \oplus (a' \with c_t)$,
        \item for $\tau(s_j) = (+, B, s_k)$: $c_j \multimap c_k \otimes b$,
        \item for $\tau(s_j) = (-, B, s_k, s_l)$: $c_j \otimes b \multimap c_k$ and $c_j \multimap c_l \oplus (b' \with c_t)$,
        \item $a' \multimap a' \with c_t$, $(a' \with c_t) \otimes b \multimap a' \with c_t$, $b' \multimap b' \with c_t$, $(b' \with c_t) \otimes a \multimap b' \with c_t$.
    \end{enumerate}
\end{definition}
The propositional variables ``$a$'', ``$b$'' correspond to two counters, while $a'$, $b'$ are introduced to deal with conditional branches of decrement commands. The implication ``$\multimap$'' represents the transition of states and the incrementing or decrementing of the counters. The variable ``$a$'' or ``$b$'' to the right of the implication, say in $c_j \multimap c_k \otimes a$ or $c_j \multimap c_k \otimes b$, corresponds to incrementing of the first or second counter, while ``$a$'' or ``$b$'' to the left of the implication, say in $c_j \otimes a \multimap c_k$ or $c_j \otimes b \multimap c_k$,  corresponds to decrementing of the first or second counter. Four formulas of item (5) are required to make a correspondence between an acceptance and the axiom.

This translation is obtained by modifying those of Kanovich \cite{Kanovich1995} and Lafont \cite{Lafont1996} introduced to show the undecidability of $\mathbf{CLL}$. In their translations, they chose
\begin{center}
    $c_j \multimap c_l \oplus a'$,
    $c_j \multimap c_l \oplus b'$
    \end{center}
instead of the second formulas item (2) and item (4) respectively and chose
    \begin{center}    
    $a' \multimap c_t$, 
    $a' \otimes b \multimap a'$, 
    $b' \multimap c_t$, 
    $b' \otimes a \multimap b'$
    \end{center}
instead of item (5) of Definition \ref{dfn:translation}. Furthermore, while Kanovich and Lafont used a finite multiset of formulas, we use a single formula $\theta_M$ connected by $\with$. This corresponds to the situation where only a necessary program is extracted. This becomes important when expressing a system where resources cannot be discarded.
\begin{lemma}\label{Main Theorem 2}
    For any two-counter machine $M = (S, s_t, \tau)$ and any ID $(s_i, p, q)$, if $(s_i, p, q)$ is accepted by $M$, then the sequent $(\oc \theta_M)^{g(s_i)}, c_i, a^p, b^q \Rightarrow c_t$ is provable in $\mathbf{CLLRR}$, where $g(s_i) = 0$ if $s_i = s_t$, otherwise $g(s_i) = 1$.
\end{lemma}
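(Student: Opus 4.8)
The plan is to prove Lemma \ref{Main Theorem 2} by induction on the length $n$ of the accepted sequence of $(s_i, p, q)$ in $M$, exploiting the fact that an accepted sequence is unique when it exists and ends at $(s_t, 0, 0)$.

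\textbf{Base case.} When $n = 0$, the ID is already terminal, so $(s_i, p, q) = (s_t, 0, 0)$, giving $g(s_i) = 0$, $p = q = 0$, and $c_i = c_t$. The target sequent collapses to $(\oc\theta_M)^0, c_t \Rightarrow c_t$, i.e. $c_t \Rightarrow c_t$, which is an instance of $\mathbf{id}$. This is immediate.

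\textbf{Inductive step.} Suppose the accepted sequence has length $n \geq 1$, so $(s_i, p, q) \leadsto (s_{i'}, p', q')$ in one step and $(s_{i'}, p', q')$ is accepted by a sequence of length $n-1$; note $s_i \neq s_t$, so $g(s_i) = 1$. The induction hypothesis yields a $\mathbf{CLLRR}$-proof of $(\oc\theta_M)^{g(s_{i'})}, c_{i'}, a^{p'}, b^{q'} \Rightarrow c_t$. I would then do a case analysis on the four possible forms of $\tau(s_i)$. In each case the relevant conjunct of $\theta_M$ is extracted from a single copy of $\oc\theta_M$ using $[\with l_i]$ together with $[\oc l]$, and is combined with the induction hypothesis by $[\multimap l]$, $[\otimes r]$/$[\otimes l]$, and $[\oplus r_i]$ to rewrite $c_i$ (and the counters $a^p, b^q$) into $c_{i'}$, $a^{p'}, b^{q'}$. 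For the increment case $\tau(s_i) = (+, A, s_k)$ one uses the conjunct $c_i \multimap c_k \otimes a$ to pass from $c_i$ to $c_k \otimes a$, matching $p' = p + 1$. For the successful-decrement case $\tau(s_i) = (-, A, s_k, s_l)$ with $p > 0$ one uses $c_i \otimes a \multimap c_k$, consuming one $a$ so that $p' = p - 1$. The harder sub-case is the zero-test branch $\tau(s_i) = (-, A, s_k, s_l)$ with $p = 0$: here $c_i$ is rewritten via $c_i \multimap c_l \oplus (a' \with c_t)$, and the $\oplus$ must be resolved toward $c_l$ using $[\oplus r_0]$ — but one must check the contraction bookkeeping on $\oc\theta_M$, since the step consumes one copy while the induction hypothesis still supplies $g(s_{i'})$ copies; a single application of $[\oc \mathbf{C}]$ reconciles the copy count. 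The $B$-cases are handled symmetrically with $b$, $b'$.

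\textbf{The anticipated main obstacle} is the exponential bookkeeping forced by the absence of weakening. Unlike the $\mathbf{CLL}$ setting, $\mathbf{CLLRR}$ cannot discard surplus copies of $\oc\theta_M$, so I must ensure the copy count is \emph{exactly} $g(s_i)$ at each stage rather than merely an upper bound. Since $s_i \neq s_t$ forces $g(s_i) = 1$ and $g(s_{i'}) \in \{0, 1\}$, the induction hypothesis provides either zero or one copy, while the current step needs one copy of $\oc\theta_M$ to extract its conjunct; writing $\oc\theta_M$ on the left and applying $[\oc l]$ to a single copy, then (when $g(s_{i'}) = 1$) using $[\oc \mathbf{C}]$ to merge the copy used in this step with the copy demanded by the induction hypothesis, yields exactly one $\oc\theta_M$ in the conclusion, as required. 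The zero-test conjuncts from item (5) of Definition \ref{dfn:translation}, namely $a' \multimap a' \with c_t$ and $(a' \with c_t) \otimes b \multimap a' \with c_t$, are not needed in this direction (they serve the soundness/converse argument via phase semantics), so the forward simulation uses only the program conjuncts (1)–(4); verifying that these suffice, with the contraction count controlled exactly, is the delicate point.
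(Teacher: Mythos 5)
Your induction set-up, base case, and the contraction bookkeeping for the increment and successful-decrement cases all match the paper's proof. But there is a genuine gap in your treatment of the zero-test branch, and it invalidates your closing claim that the item (5) conjuncts are not needed in this direction. After applying $[\multimap l]$ to the conjunct $c_i \multimap c_l \oplus (a' \with c_t)$, the formula $c_l \oplus (a' \with c_t)$ lands in the \emph{antecedent} of the right premise, so the applicable rule is $[\oplus l]$, not $[\oplus r_0]$: you cannot ``resolve the $\oplus$ toward $c_l$.'' The rule $[\oplus l]$ is additive and forces you to prove \emph{both} premises, namely $(\oc\theta_M)^{g(s_l)}, c_l, b^q \Rightarrow c_t$ (which your induction hypothesis supplies) and $(\oc\theta_M)^{g(s_l)}, a' \with c_t, b^q \Rightarrow c_t$ (which it does not).

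Discharging that second premise is exactly where the item (5) conjuncts are indispensable, precisely because $\mathbf{CLLRR}$ has no weakening: the $q$ copies of $b$ and the possible copy of $\oc\theta_M$ in the context cannot be discarded and must be consumed. The paper proves this premise by a sub-induction on $q$, using $(a' \with c_t) \otimes b \multimap a' \with c_t$ (extracted from a fresh copy of $\oc\theta_M$ obtained by $[\oc\mathbf{C}]$) to absorb the $b$'s one at a time, and using $a' \multimap a' \with c_t$ in the base case $q = 0$ with $g(s_l) = 1$ to consume the remaining copy of $\oc\theta_M$. Without these conjuncts the sequent $(\oc\theta_M)^{g(s_l)}, a' \with c_t, b^q \Rightarrow c_t$ is simply not provable for $q > 0$, so your plan as stated breaks down at the zero-test case. (These formulas also reappear in the converse, semantic direction, but they are load-bearing here first.)
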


Note that $g: S \to \{0,1\}$ is a characteristic function of $S \setminus \{s_{t}\}$ and $g(s_{t})$ = $0$ means the situation where no more resources can be discarded. This is not used in Kanovich \cite{Kanovich1995} and Lafont \cite{Lafont1996}.
\begin{proof}[Proof of Lemma \ref{Main Theorem 2}]
    Fix any two-counter machine $M = (S, s_t, \tau)$. We show the following by mathematical induction on $n$ (when we refer to an ``accepted sequence'', we mean an accepted sequence in $M$):
    \begin{quote}
        For all ID $(s_i, p, q)$, if there is the accepted sequence of $(s_i, p, q)$ of length $n$ then the sequent $(\oc\theta_M)^{g(s_i)}, c_i, a^p, b^q \Rightarrow c_t$ is provable in $\mathbf{CLLRR}$.
    \end{quote}
    (\textbf{Base Case}) Let $n = 0$. Fix any ID $(s_i, p, q)$. Since there exists only one accepted sequence of length $0$, which is $(s_t, 0, 0)$ and we $g(s_t) = 0$, it suffice to show that $c_t \Rightarrow c_t$ is provable in $\mathbf{CLLRR}$, but 
    this is obvious.\\
    (\textbf{Inductive Step}) Let $n > 0$. Fix any ID $(s_i, p, q)$. There are six cases based on the ID $(s_i, p, q)$ and program $\tau(s_i)$. Due to space limitations, only three cases are shown below. The other cases can be shown similarly.
    \begin{enumerate}
        \item Let $\tau(s_i) = (+, A, s_j)$. Suppose that there exists the accepted sequence of length $n$: $(s_i, p, q), (s_j, p+1, q), \ldots, (s_t, 0, 0)$. Since $g(s_i) = 1$, we need to show that $\oc\theta_M, c_i, a^p, b^q \Rightarrow c_t$ is provable in $\mathbf{CLLRR}$. By induction hypothesis, there is a proof $\mathcal{D}$ of the sequent $(\oc\theta_M)^{g(s_j)}, c_j, a^{p+1}, b^q \Rightarrow c_t$. Then we proceed as follows:
        \[
            \begin{bprooftree}
                \AxiomC{}
                \RightLabel{\textbf{id}}
                \UnaryInfC{$c_i \Rightarrow c_i$}
                \AxiomC{$\mathcal{D}$}
                \noLine
                \UnaryInfC{$(\oc\theta_M)^{g(s_j)}, c_j, a^{p+1}, b^q \Rightarrow c_t$}
                \RightLabel{[$\otimes l$]}
                \UnaryInfC{$(\oc\theta_M)^{g(s_j)}, c_j \otimes a, a^p, b^q \Rightarrow c_t$}
                \RightLabel{[$\multimap l$]}
                \BinaryInfC{$(\oc\theta_M)^{g(s_j)}, c_i \multimap c_j \otimes a, c_i, a^p, b^q \Rightarrow c_t$}
                \doubleLine
                \RightLabel{[$\with l]^*$}
                \UnaryInfC{$(\oc\theta_M)^{g(s_j)}, \theta_M, c_i, a^p, b^q \Rightarrow c_t$}
                \RightLabel{[$\oc l$]}
                \UnaryInfC{$(\oc\theta_M)^{g(s_j)}, \oc\theta_M, c_i, a^p, b^q \Rightarrow c_t$}
                \RightLabel{[$\oc\mathbf{C}$]}
                \UnaryInfC{$\oc\theta_M, c_i, a^p, b^q \Rightarrow c_t.$}
            \end{bprooftree}
        \]
        Notice that if $g(s_j) = 0$, then we do not need to apply the last $\mathbf{C}$ rule.
        \item Let $\tau(s_i) = (-, A, s_j, s_k)$ and $p > 0$. Suppose that there exists the accepted sequence of length $n$: $(s_i, p, q), (s_j, p-1, q),\ldots, (s_t, 0, 0)$. Since $g(s_i) = 1$, we need to show that $\oc\theta_M, c_i, a^p, b^q \Rightarrow c_t$ is provable in $\mathbf{CLLRR}$. By induction hypothesis, there is a proof $\mathcal{D}$ of the sequent $(\oc\theta_M)^{g(s_j)}, c_j, a^{p-1}, b^q \Rightarrow c_t$. We obtain the following proof of $\oc\theta_M, c_i, a^p, b^q \Rightarrow c_t$ from it:
        \[
            \begin{bprooftree}
                \AxiomC{}
                \RightLabel{\textbf{id}}
                \UnaryInfC{$c_i \Rightarrow c_i$}
                \AxiomC{}
                \RightLabel{\textbf{id}}
                \UnaryInfC{$a \Rightarrow a$}
                \RightLabel{[$\otimes r$]}
                \BinaryInfC{$c_i, a \Rightarrow c_i \otimes a$}
                \AxiomC{$\mathcal{D}$}
                \noLine
                \UnaryInfC{$(\oc\theta_M)^{g(s_j)}, c_j, a^{p-1}, b^q \Rightarrow c_t$}
                \RightLabel{[$\multimap l$]}
                \BinaryInfC{$(\oc\theta_M)^{g(s_j)}, c_i \otimes a \multimap c_j, c_i, a^p, b^q \Rightarrow c_t$}
                \doubleLine
                \RightLabel{[$\with l]^*$}
                \UnaryInfC{$(\oc\theta_M)^{g(s_j)}, \theta_M, c_i, a^p, b^q \Rightarrow c_t$}
                \RightLabel{[$\oc l$]}
                \UnaryInfC{$(\oc\theta_M)^{g(s_j)}, \oc\theta_M, c_i, a^p, b^q \Rightarrow c_t$}
                \RightLabel{[$\oc\mathbf{C}$]}
                \UnaryInfC{$(\oc\theta_M)^{g(s_j)}, c_i, a^p, b^q \Rightarrow c_t.$}
            \end{bprooftree}
        \]
        Notice that if $g(s_j) = 0$, then we do not need to apply the last $\oc\mathbf{C}$ rule.
        \item Let $\tau(s_i) = (-, A, s_j, s_k)$ and $p = 0$. Suppose that there exists the accepted sequence of length $n$: $(s_i, 0, q), (s_k, 0, q), \ldots, (s_t, 0, 0)$. Since $g(s_i) = 1$, we need to show that $\oc \theta_M, c_i, b^q \Rightarrow c_t$ is provable in $\mathbf{CLLRR}$. There exists the accepted sequence of the ID $(s_k, 0, q)$ of length $n-1$. By induction hypothesis, there is a proof $\mathcal{D}$ of the sequent $(\oc\theta_M)^{g(s_k)}, c_k, b^q \Rightarrow c_t$. Furthermore, it can be shown by mathematical induction on $q \in \mathbb{N}$ that there is a proof $\mathcal{E}_q$ of $(\oc\theta_M)^{g(s_k)}, a' \with c_t, b^q \Rightarrow c_t$. We show by mathematical induction on $q \in \mathbb{N}$ that there is a proof $\mathcal{E}_q$ of $(\oc\theta_M)^{g(s_k)}, a' \with c_t, b^q \Rightarrow c_t$. First, let $q = 0$. If $g(s_k) = 0$, we can make a proof $\mathcal{E}_0$ of $a' \with c_t \Rightarrow c_t$ as follows:
        \[
            \begin{bprooftree}
                \AxiomC{}
                \RightLabel{\textbf{id}}
                \UnaryInfC{$c_t \Rightarrow c_t$}
                \RightLabel{[$\with l_1$]}
                \UnaryInfC{$a' \with c_t \Rightarrow c_t.$}
            \end{bprooftree}
        \]
        If $g(s_k) = 1$, we can make a proof $\mathcal{E}_0$ of $\oc\theta_M, a' \with c_t \Rightarrow c_t$ as follows:
        \[
            \begin{bprooftree}
                \AxiomC{}
                \RightLabel{\textbf{id}}
                \UnaryInfC{$a' \Rightarrow a'$}
                \RightLabel{[$\with l_0$]}
                \UnaryInfC{$a' \with c_t \Rightarrow a'$}
                \AxiomC{}
                \RightLabel{\textbf{id}}
                \UnaryInfC{$c_t \Rightarrow c_t$}
                \RightLabel{[$\with l_1$]}
                \UnaryInfC{$a' \with c_t \Rightarrow c_t$}
                \RightLabel{[$\multimap l$]}
                \BinaryInfC{$a' \multimap a' \with c_t, a' \with c_t \Rightarrow c_t$}
                \RightLabel{[$\with l]^*$}
                \doubleLine
                \UnaryInfC{$\theta_M, a' \with c_t \Rightarrow c_t$}
                \RightLabel{[$\oc l$]}
                \UnaryInfC{$\oc\theta_M, a' \with c_t \Rightarrow c_t.$}
            \end{bprooftree}
        \]
        Next, let $q > 0$. By induction hypothesis, $(\oc\theta_M)^{g(s_k)}, a' \with c_t, b^{q-1} \Rightarrow c_t$ has a proof $\mathcal{E}_{q-1}$. We obtain the following proof $\mathcal{E}_q$ of $\oc\theta_M, a' \with c_t, b^q \Rightarrow c_t$ from it:
        \[
            \begin{bprooftree}
                \AxiomC{}
                \RightLabel{\textbf{id}}
                \UnaryInfC{$a' \with c_t \Rightarrow a' \with c_t$}
                \AxiomC{}
                \RightLabel{\textbf{id}}
                \UnaryInfC{$b \Rightarrow b$}
                \RightLabel{[$\otimes r$]}
                \BinaryInfC{$a' \with c_t, b \Rightarrow (a' \with c_t) \otimes b$}
                \AxiomC{$\mathcal{E}_{q-1}$}
                \noLine
                \UnaryInfC{$(\oc\theta_M)^{g(s_k)}, a' \with c_t, b^{q-1} \Rightarrow c_t$}
                \RightLabel{[$\multimap l$]}
                \BinaryInfC{$(\oc\theta_M)^{g(s_k)}, (a' \with c_t) \otimes b \multimap a' \with c_t, a' \with c_t, b^q \Rightarrow c_t$}
                \doubleLine
                \RightLabel{[$\with l]^*$}
                \UnaryInfC{$(\oc\theta_M)^{g(s_k)}, \theta_M, a' \with c_t, b^q \Rightarrow c_t$}
                \RightLabel{[$\oc l$]}
                \UnaryInfC{$(\oc\theta_M)^{g(s_k)}, \oc\theta_M, a' \with c_t, b^q \Rightarrow c_t$}
                \RightLabel{[$\oc \mathbf{C}$]}
                \UnaryInfC{$\oc\theta_M, a' \with c_t, b^q \Rightarrow c_t.$}
            \end{bprooftree}
        \]
        Notice that if $g(s_k) = 0$, then we do not need to apply the last $\oc\mathbf{C}$ rule. Combining $\mathcal{D}$ and $\mathcal{E}_q$, we obtain the following proof of $\oc\theta_M, c_i, b^q \Rightarrow c_t$:
        \[
            \begin{bprooftree}
                \AxiomC{}
                \RightLabel{\textbf{id}}
                \UnaryInfC{$c_i \Rightarrow c_i$}
                \AxiomC{$\mathcal{D}$}
                \noLine
                \UnaryInfC{$(\oc\theta_M)^{g(s_k)}, c_k, b^q \Rightarrow c_t$}
                \AxiomC{$\mathcal{E}_q$}
                \noLine
                \UnaryInfC{$(\oc\theta_M)^{g(s_k)}, a' \with c_t, b^q \Rightarrow c_t$}
                \RightLabel{[$\oplus l$]}
                \BinaryInfC{$(\oc\theta_M)^{g(s_k)}, c_k \oplus (a' \with c_t), b^q \Rightarrow c_t$}
                \RightLabel{[$\multimap l$]}
                \BinaryInfC{$(\oc\theta_M)^{g(s_k)}, c_i \multimap c_k \oplus (a' \with c_t), c_i, b^q \Rightarrow c_t$}
                \RightLabel{[$\with l]^*$}
                \doubleLine
                \UnaryInfC{$(\oc\theta_M)^{g(s_k)}, \theta_M, c_i, b^q \Rightarrow c_t$}
                \RightLabel{[$\oc l$]}
                \UnaryInfC{$(\oc\theta_M)^{g(s_k)}, \oc\theta_M, c_i, b^q \Rightarrow c_t$}
                \RightLabel{[$\oc \mathbf{C}$]}
                \UnaryInfC{$\oc\theta_M, c_i, b^q \Rightarrow c_t.$}
            \end{bprooftree}
        \]
        Notice that if $g(s_k) = 0$, then we do not need to apply the last $\oc\mathbf{C}$ rule.\qedhere
    \end{enumerate}
\end{proof}

To prove the converse of Lemma \ref{Main Theorem 2}, we introduce a special kind of phase model, which is the same one as in Lafont \cite{Lafont1996}. In the following, we write $a^2b$ for a multiset $\{a, a, b\}$.
\begin{definition}
Given a two-counter machine $M = (S, s_t, \tau)$, the phase model $\mathcal{P}_M = ((\mathcal{M}, \bot), v)$ derived from $M$ is defined as follows:
\begin{itemize}
    \item $|\mathcal{M}| = \{\Gamma \mid \text{$\Gamma$ is a finite multiset of elements of $\sf{Prop}$}\}$. The unit $1 = \emptyset$. The monoid operator $\cdot = \cup$ $($the union operation of multisets$)$.
    \item $\bot$ is defined by
    \begin{align*}
        \bot = &~\{c_ia^pb^q \mid \text{$(s_i, p, q)$ is accepted by $M$}\} \cup \{a'b^q \mid q \in \mathbb{N}\} \cup \{b'a^p \mid p \in \mathbb{N}\}.
    \end{align*}
    \item $v(p) = {\sim\sim}\{p\}$.
\end{itemize}
\end{definition}

It is clear by definition of $\mathcal{P}_M$ that $\mathcal{I} = \{1\}$ and that $v(c_t) = \bot$.

\begin{lemma}\label{Main Lemma 2}
    For any two-counter machine $M = (S, s_t, \tau)$ and any ID $(s_i, p, q)$, if the sequent $(\oc \theta_M)^{g(s_i)}, c_i, a^p, b^q \Rightarrow c_t$ is provable in $\mathbf{CLLRR}$, then $(s_i, p, q)$ is accepted by $M$.
\end{lemma}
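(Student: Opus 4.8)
The plan is to exploit soundness of $\mathbf{CLL}$ for phase semantics (Proposition \ref{soundness}, in the form of Lemma \ref{inclusion}) evaluated in the specific model $\mathcal{P}_M$, turning the provability hypothesis into a set-theoretic inclusion that forces the multiset $c_i a^p b^q$ into $\bot$. First I would observe that every axiom and rule of $\mathbf{CLLRR}$ is also one of $\mathbf{CLL}$, so a $\mathbf{CLLRR}$-proof of $(\oc\theta_M)^{g(s_i)}, c_i, a^p, b^q \Rightarrow c_t$ is in particular a $\mathbf{CLL}$-proof. This lets me apply Lemma \ref{inclusion} with the ``true'' formulas being the copies of $\oc\theta_M$ and with $c_i, a^p, b^q$ playing the role of $A_1,\ldots,A_n$ (here $n = 1+p+q \geq 1$); the output will be $[c_i][a]^p[b]^q \subseteq [c_t]$.

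For this to be usable, the essential preliminary step is to prove that $\theta_M$ is true in $\mathcal{P}_M$, and hence that $\oc\theta_M$ is true: since $\mathcal{I} = \{1\}$ in $\mathcal{P}_M$, once $1 \in [\theta_M]$ we get $1 \in [\theta_M] \cap \mathcal{I} \subseteq [\oc\theta_M]$. Because $\theta_M$ is a $\with$-conjunction and $[A \with B] = [A] \cap [B]$, truth of $\theta_M$ reduces to truth of each conjunct separately; and each conjunct is an implication $E \multimap F$, for which $1 \in [E \multimap F]$ holds exactly when $[E] \subseteq [F]$. So I would verify $[E] \subseteq [F]$ for every conjunct listed in Definition \ref{dfn:translation}.

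The hard part will be these inclusions, which is precisely where the design of $\bot$ pays off. I would first record the computational identities valid in $\mathcal{P}_M$: $[c_k \otimes a] = {\sim\sim}\{c_k a\}$, the identity $[a' \with c_t] = {\sim\sim}\{a'\}$ (which follows from $a' = a'b^0 \in \bot$ together with $[c_t] = \bot$), and consequently $[c_l \oplus (a' \with c_t)] = {\sim\sim}\{c_l, a'\}$, plus the analogous statements with $b$. All of these follow from Proposition \ref{closure} (in particular clause (4), which gives ${\sim\sim}({\sim\sim}X \cdot {\sim\sim}Y) = {\sim\sim}(XY)$). Each inclusion then unwinds, via $X \subseteq {\sim\sim}X$ and monotonicity, into a single membership of the form $c_j \in {\sim\sim}\{\ldots\}$, i.e.\ into the statement that whenever $y$ sends the right-hand generators into $\bot$ it also sends $c_j$ into $\bot$. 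Tracing the definition of $\bot$, a multiset containing $a'$ must have shape $a'b^q$, one containing $b'$ shape $b'a^p$, and one containing a state variable shape $c_m a^r b^s$ with $(s_m,r,s)$ accepted; so each inclusion becomes exactly the assertion that $\bot$ is closed \emph{backwards} along the corresponding transition of $M$. For the decrement-to-zero conjuncts the auxiliary formulas of item (5) are what make this work: $a'$ absorbs the leftover $b$'s (and dually $b'$ the leftover $a$'s), matching $(s_j,0,q) \leadsto (s_l,0,q)$. This case analysis, one case per program form together with the four auxiliary formulas, is the main labor of the proof.

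Finally, with $\theta_M$ and hence $\oc\theta_M$ true in $\mathcal{P}_M$, Lemma \ref{inclusion} gives $[c_i][a]^p[b]^q \subseteq [c_t] = \bot$. Since $c_i \in [c_i]$, $a \in [a]$, $b \in [b]$ by Proposition \ref{closure}(1), the multiset $c_i a^p b^q$ lies in the left-hand product, so $c_i a^p b^q \in \bot$. As this multiset contains the state variable $c_i$, it cannot be of the form $a'b^q$ or $b'a^p$, and therefore by definition of $\bot$ it belongs to the first set, which is exactly the statement that $(s_i,p,q)$ is accepted by $M$. This closes the argument and, together with Lemma \ref{Main Theorem 2}, establishes the provability/acceptance equivalence.
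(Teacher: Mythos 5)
Your proposal is correct and follows essentially the same route as the paper's proof: pass from $\mathbf{CLLRR}$-provability to $\mathbf{CLL}$-provability, apply Lemma \ref{inclusion} in the model $\mathcal{P}_M$, and reduce everything to verifying that each $\with$-conjunct of $\theta_M$ is true in $\mathcal{P}_M$, i.e.\ that $\bot$ is closed backwards along the machine's transitions. Your auxiliary identity $[a' \with c_t] = {\sim\sim}\{a'\}$ is a valid small streamlining of the paper's direct manipulation of ${\sim\sim}\{a'\} \cap \bot$, but the argument is otherwise the same.
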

\begin{proof}
    Suppose that the sequent $(\oc \theta_M)^{g(s_i)}, c_i, a^p, b^q \Rightarrow c_t$ is provable in $\mathbf{CLLRR}$, which implies the sequent is provable also in $\mathbf{CLL}$.
    By Lemma \ref{inclusion}, if $\oc\theta_M$ is true in the phase model $\mathcal{P}_M$, i.e., $1 = \emptyset \in [\oc\theta_M]$, then $[c_i][a]^p[b]^q \subseteq [c_t]$, which means that, by the definition of $\mathcal{P}_M$ and Proposition \ref{closure}, $c_ia^pb^q \in v(c_t) = \bot$. If $c_ia^pb^q \in \bot$, by the definition of $\bot$, the ID $(s_i, p, q)$ is accepted by $M$. 
    So in what follows, it suffices to show that $\oc\theta_M$ is true in $\mathcal{P}_M$. By the fact that $\oc \theta_M = {\sim\sim}(\theta_M \cap \mathcal{I})$ and $\mathcal{I} = \{1\}$, and by the definition of $\with$, it suffices to show that all the formulas connected by $\with$ when defining $\theta_M$ are true in $\mathcal{P}_M$. We only show five cases because the others can be shown in a similar way.
    \begin{enumerate}
        \item Let $c_j \multimap c_k \otimes a$ be a component of $\theta_M$. To show that $1 \in [c_j \multimap c_k \otimes a]$, it suffices to show that $c_j \in {\sim\sim}\{c_ka\}$, which implies that $c_j \in {\sim\sim}({\sim\sim}\{c_k\} \cdot {\sim\sim}\{a\})$ since $\{c_ka\} \subseteq {\sim\sim}({\sim\sim}\{c_k\} \cdot {\sim\sim}\{a\})$ by Proposition \ref{closure}. Fix any $x \in {\sim}\{c_ka\}$. We show that $c_jx \in \bot$. Since $x \in {\sim}\{c_ka\}$, we have $c_kax \in \bot$. By the definition of $\bot$, $x$ is of the form $a^pb^q$, and $c_ka^{p+1}b^q \in \bot$, which means that the ID $(s_k, p+1, q)$ is accepted by $M$. By the definition of $\theta_M$, $\tau(s_j) = (+, A, s_k)$, so $(s_j, p, q)$ is also accepted by $M$. Therefore, $c_jx = c_ja^pb^q \in \bot$.
        \item Let $c_j \otimes a \multimap c_k$ be a component of $\theta_M$. It suffices to show that $c_ja \in {\sim\sim}\{c_k\}$, which implies that ${\sim\sim}\{c_ja\} \subseteq {\sim\sim}\{c_k\}$ by Proposition \ref{closure}. Fix any $x \in {\sim}\{c_k\}$. We show that $c_jax \in \bot$. Since $x \in {\sim}\{c_k\}$, we have $c_kx \in \bot$. By the definition of $\bot$, $x$ is of the form $a^pb^q$, and $c_ka^pb^q \in \bot$, which means that the ID $(s_k, p, q)$ is accepted by $M$. By the definition of $\theta_M$, $\tau(s_j) = (-, A, s_k, s_l)$, so $(s_j, p+1, q)$ is also accepted by $M$. Therefore, $c_jax = c_ja^{p+1}b^q \in \bot$.
        \item Let $c_j \multimap c_l \oplus (a' \with c_t)$ be a component of $\theta_M$. It suffices to show that $c_j \in {\sim\sim}({\sim\sim}\{c_l\} \cup ({\sim\sim}\{a'\} \cap \bot))$. Fix any $x \in {\sim}({\sim\sim}\{c_l\} \cup ({\sim\sim}\{a'\} \cap \bot))$. We show that $c_jx \in \bot$. Since ${\sim}({\sim\sim}\{c_l\} \cup ({\sim\sim}\{a'\} \cap \bot)) = {\sim}\{c_l\} \cap {\sim}({\sim\sim}\{a'\} \cap \bot)$, we have $x \in {\sim}({\sim\sim}\{a'\} \cap \bot)$. Since $a' \in {\sim\sim}\{a'\} \cap \bot$, $a'x \in \bot$ and so $x$ must be of the form $b^n$ for some $n \in \mathbb{N}$. Since $x \in {\sim}\{c_l\}$, $c_lx = c_lb^n \in \bot$, which means that the ID $(s_l, 0, n)$ is accepted by $M$. By the definition of $\theta_M$, $\tau(s_j) = (-, A, s_k, s_l)$, so $(s_j, 0, n)$ is also accepted by $M$. Therefore, $c_jb^n = c_jx \in \bot$.
        \item Let $a' \multimap a' \with c_t$ be a component of $\theta_M$. It suffices to show that $a' \in {\sim\sim}\{a'\} \cap \bot$. But this is clear since $a' \in {\sim\sim}\{a'\}$ by Proposition \ref{closure} and since $a' \in \bot$ by the definition of $\bot$.
        \item Let $(a' \with c_t) \otimes b \multimap a' \with c_t$ be a component of $\theta_M$. We show that $${\sim\sim}(({\sim\sim}\{a'\} \cap \bot) \cdot {\sim\sim}\{b\}) \subseteq {\sim\sim}\{a'\} \cap \bot.$$
    Since ${\sim\sim}(({\sim\sim}\{a'\} \cap \bot) \cdot {\sim\sim}\{b\}) \subseteq {\sim\sim}({\sim\sim}\{a'\} \cdot {\sim\sim}\{b\}) \subseteq {\sim\sim\sim\sim}\{a'b\} = {\sim\sim}\{a'b\}$ by Proposition \ref{closure}, it suffices to show that $a'b \in {\sim\sim}\{a'\} \cap \bot$, which implies that ${\sim\sim}\{a'b\} \subseteq {\sim\sim}\{a'\} \cap \bot$. But $a'b \in {\sim\sim}\{a'\}$ is shown by the fact that ${\sim}\{a'\} = \{b^n \mid n \in \mathbb{N}\}$ and that for all $n \in \mathbb{N}$, $a'bb^n = a'b^{n+1} \in \bot$. Furthermore, $a'b \in \bot$ is clear by the definition of $\bot$. \qedhere
    \end{enumerate}
\end{proof}

Hence, the following holds by Lemma \ref{Main Theorem 2} and Lemma \ref{Main Lemma 2}.
\begin{corollary}\label{Main Corrollary}
    For any two-counter machine $M = (S, s_t, \tau)$ and any ID $(s_i, p, q)$, the sequent $(\oc \theta_M)^{g(s_i)}, c_i, a^p, b^q \Rightarrow c_t$ is provable in $\mathbf{CLLRR}$ iff $(s_i, p, q)$ is accepted by $M$.
\end{corollary}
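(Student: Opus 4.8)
The plan is to read the biconditional directly off the two lemmas that immediately precede it, since the corollary merely packages them together. I would split the equivalence into its two implications and supply each from a single preceding result. For the direction ``if $(s_i,p,q)$ is accepted by $M$, then $(\oc\theta_M)^{g(s_i)}, c_i, a^p, b^q \Rightarrow c_t$ is provable in $\mathbf{CLLRR}$,'' I would appeal to Lemma \ref{Main Theorem 2}, which asserts exactly this. For the converse, ``if that sequent is provable in $\mathbf{CLLRR}$, then $(s_i,p,q)$ is accepted by $M$,'' I would invoke Lemma \ref{Main Lemma 2}, which gives precisely the reverse implication. Chaining the two yields the stated biconditional for an arbitrary machine $M$ and an arbitrary ID $(s_i,p,q)$, so the corollary follows in one line.

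Consequently there is no genuine obstacle at the level of the corollary itself: once both lemmas are in hand, the argument is a formal combination that introduces no new constructions. All the substantive work has already been discharged separately, and it is worth recalling where the difficulty actually resides. The forward implication (Lemma \ref{Main Theorem 2}) proceeds by induction on the length of the accepted sequence, assembling $\mathbf{CLLRR}$-derivations that mimic each transition of $M$ and invoking $[\oc\mathbf{C}]$ to duplicate $\oc\theta_M$ exactly when a non-terminal state is reached, as tracked by the exponent $g$. The backward implication (Lemma \ref{Main Lemma 2}) is established semantically through the phase model $\mathcal{P}_M$ derived from $M$: one verifies that every $\with$-component of $\theta_M$ is true in $\mathcal{P}_M$, hence that $\oc\theta_M$ is true, and then Lemma \ref{inclusion} forces $c_i a^p b^q \in \bot$, which by the definition of $\bot$ means precisely that $(s_i,p,q)$ is accepted. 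Thus the only remaining task for the corollary is to cite Lemma \ref{Main Theorem 2} and Lemma \ref{Main Lemma 2} in the appropriate order, after which this equivalence between acceptance and provability is the bridge that, combined with Fact \ref{Und2cm}, will transfer the undecidability of the acceptance problem to $\mathbf{CLLRR}$.
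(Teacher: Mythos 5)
Your proposal is correct and matches the paper exactly: the corollary is obtained by simply combining Lemma \ref{Main Theorem 2} for the ``if'' direction with Lemma \ref{Main Lemma 2} for the ``only if'' direction. Your additional recap of where the substantive work lies (the inductive simulation and the phase-model argument) is accurate but not needed for the corollary itself.
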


Combining this with Fact \ref{Und2cm}, we get the undecidability.
\begin{theorem}\label{UndCLLRR}
    The problem of whether a sequent is provable in $\mathbf{CLLRR}$ is undecidable.
\end{theorem}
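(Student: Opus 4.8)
The plan is to obtain the undecidability of provability in $\mathbf{CLLRR}$ by a many-one reduction from the acceptability problem of a fixed two-counter machine, using the equivalence already established in Corollary \ref{Main Corrollary}. First I would invoke Fact \ref{Und2cm} to fix one particular two-counter machine $M = (S, s_t, \tau)$ whose acceptability problem — given an ID $(s_i, p, q)$, decide whether $(s_i, p, q)$ is accepted by $M$ — is undecidable.

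Next I would describe the reduction explicitly. Since $M$ is fixed, the formula $\theta_M$ of Definition \ref{dfn:translation} is a single fixed formula of $\mathcal{L}^-$, computable once and for all from the finite data $(S, s_t, \tau)$. Given an arbitrary input ID $(s_i, p, q)$, consider the map
\[
(s_i, p, q) \longmapsto (\oc \theta_M)^{g(s_i)}, c_i, a^p, b^q \Rightarrow c_t.
\]
This map is effectively computable: the exponent $g(s_i) \in \{0, 1\}$ is read off from whether $s_i = s_t$, the antecedent variable $c_i$ is the propositional variable attached to state $s_i$, and $a^p, b^q$ are simply $p$ copies of $a$ and $q$ copies of $b$. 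Hence it is a total recursive function from inputs of $M$ to sequents of $\mathcal{L}^-$.

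The decisive step is Corollary \ref{Main Corrollary}, which states precisely that $(\oc \theta_M)^{g(s_i)}, c_i, a^p, b^q \Rightarrow c_t$ is provable in $\mathbf{CLLRR}$ if and only if $(s_i, p, q)$ is accepted by $M$. Consequently the displayed map is a many-one reduction of the acceptability problem of $M$ to the provability problem of $\mathbf{CLLRR}$. If provability in $\mathbf{CLLRR}$ were decidable, composing that decision procedure with this computable map would decide acceptability of $M$, contradicting the choice of $M$ supplied by Fact \ref{Und2cm}. Therefore provability in $\mathbf{CLLRR}$ is undecidable.

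I do not expect a genuine obstacle at this final stage, since all the substantive content has already been absorbed into Lemma \ref{Main Theorem 2} (soundness of the simulation of accepted sequences as $\mathbf{CLLRR}$-proofs) and Lemma \ref{Main Lemma 2} (the converse, via the phase model $\mathcal{P}_M$). The only point that warrants care is confirming that the reduction is effective and uniform in the input — in particular that $\theta_M$ depends only on $M$ and not on $(s_i, p, q)$ — so that the construction genuinely transfers undecidability rather than merely producing, for each input separately, an equivalent provability question.
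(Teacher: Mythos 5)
Your proposal is correct and follows exactly the paper's route: the paper derives Theorem \ref{UndCLLRR} immediately by combining Corollary \ref{Main Corrollary} with Fact \ref{Und2cm}, and your argument merely spells out the computability of the reduction $(s_i,p,q) \mapsto (\oc\theta_M)^{g(s_i)}, c_i, a^p, b^q \Rightarrow c_t$, which the paper leaves implicit. No gap; your extra care about $\theta_M$ depending only on $M$ is a reasonable but routine check.
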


\section{Intuitionistic Case}\label{intuitionistic}
A syntax $\mathcal{L}_\mathbf{I}$ of intuitionistic propositional linear logic is defined by
\[
    \mathcal{L}_\mathbf{I} \ni A \Coloneqq p \mid \mathbf{1} \mid \top \mid \mathbf{0} \mid A \otimes A \mid A \with A \mid A \oplus A \mid A \multimap A \mid \oc A.
\]
A \emph{sequent} of $\mathcal{L}_\mathbf{I}$ is of the form $\Gamma \Rightarrow C$, where $\Gamma$ is a finite multiset of formulas of $\mathcal{L}_\mathbf{I}$ and $C \in \mathcal{L}_\mathbf{I}$. A \emph{sequent calculus} $\mathbf{ILL}$ of intuitionistic propositional linear logic is shown in Table \ref{table:ILL}, where all of the rules indicated by the dashed box will be eliminated in the next. We define systems without weakening similarly as the classical case. A system obtained by excluding the [$\mathbf{W}$] rule from $\mathbf{ILL}$ is written as $\mathbf{ILLR}$, and $\mathbf{ILLRR}$ is defined to be a system obtained by excluding the rules of the unit $\mathbf{1}$ from $\mathbf{ILLR}$, whose syntax $\mathcal{L}^{-}_{\mathbf{I}}$ is obtained by omitting $\mathbf{1}$ from $\mathcal{L}_\mathbf{I}$.
\begin{table}[htbp]
    \caption{Sequent Calculus of $\mathbf{ILL}$}
    \centering
    \renewcommand{\arraystretch}{2.3}
    \begin{tabular}{|cc|}
    \hline\begin{bprooftree}
  \AxiomC{}
  \RightLabel{\textbf{id}}
  \UnaryInfC{$A \Rightarrow A$}
  \end{bprooftree}
&
  \begin{bprooftree}
  \AxiomC{$\Gamma \Rightarrow A$}
  \AxiomC{$A, \Delta \Rightarrow B$}
  \RightLabel{$Cut$}
  \BinaryInfC{$\Gamma, \Delta \Rightarrow B$}
  \end{bprooftree} \\

  \multicolumn{2}{|c|}{
  \dbox{
  \begin{bprooftree}
  \AxiomC{}
  \RightLabel{[$\mathbf{1} r$]}
  \UnaryInfC{$\Rightarrow \mathbf{1}$}
  \end{bprooftree}
\begin{bprooftree}
  \AxiomC{$\Gamma \Rightarrow C$}
  \RightLabel{[$\mathbf{1} l$]}
  \UnaryInfC{$\mathbf{1}, \Gamma \Rightarrow C$}
\end{bprooftree}
}
\begin{bprooftree}
  \AxiomC{}
  \RightLabel{[$\top r$]}
  \UnaryInfC{$\Gamma \Rightarrow \top$}
\end{bprooftree}
\begin{bprooftree}
  \AxiomC{}
  \RightLabel{[$\mathbf{0} l$]}
  \UnaryInfC{$\mathbf{0}, \Gamma \Rightarrow C$}
\end{bprooftree}}
\\
  \begin{bprooftree}
  \AxiomC{$\Gamma \Rightarrow A$}
  \AxiomC{$\Delta \Rightarrow B$}
  \RightLabel{[$\otimes r$]}
  \BinaryInfC{$\Gamma, \Delta \Rightarrow A \otimes B$}
  \end{bprooftree}
&
  \begin{bprooftree}
  \AxiomC{$A, B, \Gamma \Rightarrow C$}
  \RightLabel{[$\otimes l$]}
  \UnaryInfC{$A \otimes B, \Gamma \Rightarrow C$}
  \end{bprooftree}\\

  \begin{bprooftree}
  \AxiomC{$\Gamma \Rightarrow A$}
  \AxiomC{$\Gamma \Rightarrow B$}
  \RightLabel{[$\with r$]}
  \BinaryInfC{$\Gamma \Rightarrow A \with B$}
  \end{bprooftree}
&
  \begin{bprooftree}
  \AxiomC{$A_i, \Gamma \Rightarrow C$}
  \RightLabel{[$\with l_i$] $(i \in \{0, 1\})$}
  \UnaryInfC{$A_0 \with A_1, \Gamma \Rightarrow C$}
  \end{bprooftree}\\

  \begin{bprooftree}
  \AxiomC{$\Gamma \Rightarrow A_i$}
  \RightLabel{[$\oplus r_i$] $(i \in \{0, 1\})$}
  \UnaryInfC{$\Gamma \Rightarrow A_0 \oplus A_1$}
  \end{bprooftree}
&
  \begin{bprooftree}
  \AxiomC{$A, \Gamma \Rightarrow C$}
  \AxiomC{$B, \Gamma \Rightarrow C$}
  \RightLabel{[$\oplus l$]}
  \BinaryInfC{$A \oplus B, \Gamma \Rightarrow C$}
  \end{bprooftree}\\

  \begin{bprooftree}
  \AxiomC{$A, \Gamma \Rightarrow B$}
  \RightLabel{[$\multimap r$]}
  \UnaryInfC{$\Gamma \Rightarrow A \multimap B$}
  \end{bprooftree}
&
  \begin{bprooftree}
  \AxiomC{$\Gamma \Rightarrow A$}
  \AxiomC{$B, \Delta \Rightarrow C$}
  \RightLabel{[$\multimap l$]}
  \BinaryInfC{$A \multimap B, \Gamma, \Delta \Rightarrow C$}
  \end{bprooftree}\\\multicolumn{2}{|c|}{
  \dbox{
  \begin{bprooftree}
  \AxiomC{$\Gamma \Rightarrow C$}
  \RightLabel{[$\mathbf{W}$]}
  \UnaryInfC{$\oc A, \Gamma \Rightarrow C$}
  \end{bprooftree}
  }
  \begin{bprooftree}
  \AxiomC{$\oc A, \oc A, \Gamma \Rightarrow C$}
  \RightLabel{[$\mathbf{C}$]}
  \UnaryInfC{$\oc A, \Gamma \Rightarrow C$}
  \end{bprooftree}
  \begin{bprooftree}
  \AxiomC{$\oc \Gamma \Rightarrow A$}
  \RightLabel{[$\oc r$]}
  \UnaryInfC{$\oc \Gamma \Rightarrow \oc A$}
  \end{bprooftree}
  \begin{bprooftree}
  \AxiomC{$A, \Gamma \Rightarrow C$}
  \RightLabel{[$\oc l$]}
  \UnaryInfC{$\oc A, \Gamma \Rightarrow C$}
  \end{bprooftree}}
\rule[-15pt]{0pt}{30pt}\\ \hline
    \end{tabular}
    \label{table:ILL}
\end{table}

Define a translation $t \colon \mathcal{L}_\mathbf{I} \to \mathcal{L}_\mathbf{I}$ by induction as follows:
\begin{itemize}
    \item $t(p) = p,\; t(\mathbf{1}) = \mathbf{1},\; t(\top) = \top, \; t(\mathbf{0}) = \mathbf{0},$
    \item $t(A \circ B) = t(A) \circ t(B) \; (\circ \in \{\otimes, \with, \oplus, \multimap\}),$
    \item $t(\oc A) = \oc(t(A) \with \mathbf{1}).$
\end{itemize}
We write $t[\Gamma] = \{t(A) \mid A \in \Gamma\}$. Then, the following holds.
\begin{lemma}
    Let $\Gamma \Rightarrow C$ be a sequent of $\mathcal{L}_\mathbf{I}$. Then, $t[\Gamma] \Rightarrow C$ is provable in $\mathbf{ILLR}$ iff $\Gamma \Rightarrow C$ is provable in $\mathbf{ILL}$.
\end{lemma}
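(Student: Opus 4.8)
The plan is to mirror, in the intuitionistic setting, the two-sided development carried out for $\mathbf{CLLR}$ in Section \ref{Undecidability of CLLR}, proving the two implications of the biconditional separately.

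For the direction from $\mathbf{ILL}$ to $\mathbf{ILLR}$, I would argue by induction on a cut-free $\mathbf{ILL}$-derivation of $\Gamma \Rightarrow C$ (so that the axiom $\mathbf{id}$ may be taken at atomic formulas only), showing that $t[\Gamma] \Rightarrow C$ is derivable in $\mathbf{ILLR}$. The atomic axiom is immediate since $t(p) = p$. The two decisive cases are $[\mathbf{W}]$ and $[\oc r]$, handled as in Lemma \ref{simulateCLLbyCLLR1}: for $[\mathbf{W}]$, from the induction hypothesis $t[\Gamma] \Rightarrow C$ one inserts $\oc(t(A) \with \mathbf{1})$ on the left by applying $[\mathbf{1}l]$, then $[\with l_1]$ selecting the $\mathbf{1}$ conjunct, then $[\oc l]$; for $[\oc r]$, since $t$ turns every $\oc$-formula of $\oc\Gamma$ into a $\oc$-formula, the premiss $t[\oc\Gamma] \Rightarrow A$ already has an all-$\oc$ antecedent, so a single application of $[\oc r]$ yields $t[\oc\Gamma] \Rightarrow \oc A$. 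It is precisely here that leaving the succedent untranslated pays off: the conclusion carries $\oc A$, not $\oc(t(A) \with \mathbf{1})$, so no auxiliary derivation of $t[\oc\Gamma] \Rightarrow \mathbf{1}$ — which is unavailable without weakening — is required.

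For the converse direction from $\mathbf{ILLR}$ to $\mathbf{ILL}$, I would follow Lemma \ref{simulateCLLbyCLLR2}. First I establish the intuitionistic analogue of Proposition \ref{equivalence}, namely that $A \Rightarrow t(A)$ is provable in $\mathbf{ILL}$ for every $A$ (the $\oc$-case uses $[\mathbf{W}]$ and $[\mathbf{1}r]$ to supply the $\with\mathbf{1}$ conjunct, exactly as in the first derivation of Proposition \ref{equivalence}). Any $\mathbf{ILLR}$-proof of $t[\Gamma]\Rightarrow C$ is already an $\mathbf{ILL}$-proof, and cutting each $A \Rightarrow t(A)$ against it — once for every $A \in \Gamma$ — strips the translation from the antecedent and produces $\Gamma \Rightarrow C$ in $\mathbf{ILL}$. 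Since the succedent is untranslated, no cut on the right is needed, so this direction is routine.

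The main obstacle lies in the $\mathbf{ILL}$-to-$\mathbf{ILLR}$ direction, in the cases where $\oc$ occurs in a \emph{negative} position of a formula — most visibly at a left implication $[\multimap l]$ and, if non-atomic identities are admitted, already at an axiom such as $\oc p \multimap q \Rightarrow \oc p \multimap q$. There the translated antecedent formula $t(\oc p \multimap q) = \oc(t(p)\with\mathbf{1}) \multimap q$ must be fed its \emph{translated} left subformula $\oc(t(p)\with\mathbf{1})$, whereas the untranslated succedent forces the surrounding $[\multimap r]$ step to reabstract the original $\oc p$; bridging the two would require $\oc p \Rightarrow \oc(t(p)\with\mathbf{1})$, i.e.\ a genuine use of weakening that $\mathbf{ILLR}$ does not provide. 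Overcoming this is the crux: one must track polarities, establishing a \emph{pair} of simulation statements by simultaneous induction (the intuitionistic counterpart of Proposition \ref{Prop for simulation}), in which negatively-occurring $\oc$-subformulas are left untouched while positively-occurring ones receive the $\with\mathbf{1}$ guard, so that the covariant and contravariant premisses of $[\multimap l]$ and $[\multimap r]$ match up. I expect this polarity bookkeeping — reconciling the single translation $t$ with the two-sided translation $t_l,t_r$ of the classical case — to be the only genuinely delicate point; once it is in place, the remaining connectives $\otimes,\with,\oplus$ and the contraction rule $[\mathbf{C}]$ go through verbatim as in Lemmas \ref{simulateCLLbyCLLR1} and \ref{simulateCLLbyCLLR2}.
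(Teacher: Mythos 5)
Your forward direction rests on a claim that is simply false: $t[\oc\Gamma] \Rightarrow \mathbf{1}$ \emph{is} derivable in $\mathbf{ILLR}$. Every member of $t[\oc\Gamma]$ has the form $\oc(t(B) \with \mathbf{1})$, so applying $[\oc l]$ and $[\with l_1]$ to each turns the antecedent into $\mathbf{1}, \ldots, \mathbf{1}$, which is absorbed by repeated $[\mathbf{1}l]$ above the axiom $[\mathbf{1}r]$; no weakening is needed. Making guarded $\oc$-formulas discardable is exactly what the $\with \mathbf{1}$ guard is for -- it is the same mechanism used in the $[\oc\mathbf{W}]$ case of Lemma \ref{simulateCLLbyCLLR1}. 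Once this is seen, the reading of the lemma in which the succedent is translated too, $t[\Gamma] \Rightarrow t(C)$, goes through with the single homomorphic $t$ and no polarity bookkeeping whatsoever (this is the intended intuitionistic rerun of Lemmas \ref{simulateCLLbyCLLR1} and \ref{simulateCLLbyCLLR2}; the paper leaves it unproved): the identity case is $t(A) \Rightarrow t(A)$, every logical rule commutes with $t$ -- including $[\multimap r]$, since both sides now carry $t$ -- $[\mathbf{W}]$ is simulated as you describe, and $[\oc r]$ is handled by first deriving $t[\oc\Gamma] \Rightarrow \mathbf{1}$ as above, then applying $[\with r]$ and $[\oc r]$. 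The converse direction adds to your left-hand cuts a single cut with $t(C) \Rightarrow C$ on the right, where the analogue of Proposition \ref{equivalence} supplies both $A \Rightarrow t(A)$ and $t(A) \Rightarrow A$ in $\mathbf{ILL}$ by simultaneous induction.

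Your instinct that the statement with $C$ left bare resists the straightforward induction is, however, sound -- indeed, taken literally that statement is false, not merely hard: for $C = \oc p \multimap \mathbf{1}$, the sequent $\Rightarrow C$ is provable in $\mathbf{ILL}$ (by $[\mathbf{1}r]$, $[\mathbf{W}]$, $[\multimap r]$), while cut-free proof search (cut elimination for these systems follows from Danos et al., as the paper notes) shows $\oc p \Rightarrow \mathbf{1}$, hence $\Rightarrow \oc p \multimap \mathbf{1}$, is unprovable in $\mathbf{ILLR}$, since no rule of $\mathbf{ILLR}$ can ever remove the atom $p$ from the antecedent. So the succedent must be translated, and the lemma should be read as $t[\Gamma] \Rightarrow t(C)$ (the undecidability reduction is unaffected). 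Your proposed repair -- keeping $C$ untranslated and switching to a polarity-indexed pair of translations -- therefore proves a statement about a different translation rather than the lemma at issue, and moreover has the polarities inverted: in the classical pair it is the \emph{negative} occurrences of $\oc$, those handled by $t_l$, that receive the $\with \mathbf{1}$ guard (they are the ones that must be weakened away on the left), while $t_r(\oc A) = \oc\, t_r(A)$ leaves positive occurrences bare. The ``genuinely delicate point'' you anticipate dissolves entirely once the false premise about $t[\oc\Gamma] \Rightarrow \mathbf{1}$ is withdrawn; as a small consolation, your index $[\with l_1]$ in the weakening simulation is the correct one, whereas the paper's classical derivation misprints it as $[\with l_0]$.
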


By this lemma, we can show the undecidability of $\mathbf{ILLR}$ by reducing it to the undecidability of $\mathbf{ILL}$, which is a corollary of the undecidability of $\mathbf{CLL}$.
\begin{theorem}
    The problem of whether a sequent is provable in $\mathbf{ILLR}$ is undecidable.
\end{theorem}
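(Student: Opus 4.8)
The plan is to mirror the proof of Theorem~\ref{Main Theorem 1}: I would deduce the undecidability of $\mathbf{ILLR}$ from a many-one reduction of the provability problem of $\mathbf{ILL}$ to that of $\mathbf{ILLR}$, using the Lemma immediately preceding this theorem as the correctness guarantee of the reduction. Concretely, let $\rho$ be the map sending a sequent $\Gamma \Rightarrow C$ of $\mathcal{L}_\mathbf{I}$ to the translated sequent $t[\Gamma] \Rightarrow C$. Because $t$ is defined by a terminating recursion on formula structure and $\Gamma$ is a finite multiset, $\rho$ is evidently computable.

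By the preceding Lemma, $t[\Gamma] \Rightarrow C$ is provable in $\mathbf{ILLR}$ if and only if $\Gamma \Rightarrow C$ is provable in $\mathbf{ILL}$; in other words, $\rho$ both preserves and reflects provability, so it is a genuine many-one reduction. Consequently, if provability in $\mathbf{ILLR}$ were decidable, then precomposing that decision procedure with the computable map $\rho$ would decide provability in $\mathbf{ILL}$. Taking the contrapositive, the undecidability of $\mathbf{ILL}$ transfers to $\mathbf{ILLR}$, which is the claim.

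The reduction itself is immediate once the Lemma is available, so the step I expect to carry the real weight is securing the undecidability of $\mathbf{ILL}$. I would not reprove this but rather obtain it as a corollary of the undecidability of $\mathbf{CLL}$ (Lincoln et al.~\cite[Theorem 3.7]{Lincoln1992}). Two standard routes are available: one may observe that the and-branching two-counter-machine encoding behind the $\mathbf{CLL}$ result produces single-conclusion sequents that already live in the intuitionistic fragment, or one may invoke a faithful negative-translation embedding of $\mathbf{CLL}$ into $\mathbf{ILL}$; either yields that $\mathbf{ILL}$-provability is undecidable, which is all the reduction consumes. A fully self-contained alternative would bypass $\mathbf{ILL}$ and instead transport the two-counter-machine and phase-semantics argument of Section~\ref{Undecidability of CLLRR} into the intuitionistic setting, but the Lemma-based reduction is shorter and reuses the $\mathbf{CLL}$ undecidability already cited in this paper.
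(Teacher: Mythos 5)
Your proposal matches the paper's argument: the theorem is obtained exactly as you describe, by using the preceding lemma as a many-one reduction from $\mathbf{ILL}$-provability to $\mathbf{ILLR}$-provability, with the undecidability of $\mathbf{ILL}$ taken as a corollary of the undecidability of $\mathbf{CLL}$. The extra detail you give on how $\mathbf{ILL}$'s undecidability follows from $\mathbf{CLL}$'s is a reasonable elaboration of what the paper leaves implicit.
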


Our argument for the undecidability of $\mathbf{ILLRR}$ is contained in our proof of Theorem \ref{UndCLLRR}. Because our proof of Lemma \ref{Main Theorem 2} uses only the rules of $\mathbf{ILLRR}$, we can replace $\mathbf{CLLRR}$ with $\mathbf{ILLRR}$ in that lemma. Furthermore, since any sequent provable in $\mathbf{ILLRR}$ is provable also in $\mathbf{CLL}$, we can replace $\mathbf{CLLRR}$ with $\mathbf{ILLRR}$ also in Lemma \ref{Main Lemma 2}. Hence, we can also get the undecidability of $\mathbf{ILLRR}$.

\begin{theorem}
    The problem of whether a sequent is provable in $\mathbf{ILLRR}$ is undecidable.
\end{theorem}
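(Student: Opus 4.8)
The plan is to reduce the acceptance problem of the two-counter machine supplied by Fact~\ref{Und2cm} to provability in $\mathbf{ILLRR}$, reusing wholesale the machinery already built for $\mathbf{CLLRR}$ in the proof of Theorem~\ref{UndCLLRR}. The crucial observation is that the translation $\theta_M$ of Definition~\ref{dfn:translation} lives entirely inside $\mathcal{L}^{-}_{\mathbf{I}}$: every component of $\theta_M$ is built only from $\otimes$, $\oplus$, $\with$, $\multimap$ and propositional variables, with no occurrence of the units $\mathbf{1}$ or $\bot$, and the target sequent $(\oc\theta_M)^{g(s_i)}, c_i, a^p, b^q \Rightarrow c_t$ has a single formula $c_t$ in the succedent. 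Hence both the statement and the data of the $\mathbf{CLLRR}$ reduction make sense verbatim for $\mathbf{ILLRR}$, and it suffices to re-establish the two directions of Corollary~\ref{Main Corrollary} in the intuitionistic setting.

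First I would re-examine the proof of Lemma~\ref{Main Theorem 2} and check that each derivation it exhibits is already an $\mathbf{ILLRR}$ proof. Every tree there uses only \textbf{id}, $[\otimes l]$, $[\otimes r]$, $[\multimap l]$, $[\oplus l]$, $[\with l_i]$, $[\oc l]$ and $[\oc\mathbf{C}]$, and every sequent occurring in those trees is single-succedent, the right-hand side being always one of $c_t$, $c_i$, $a$, $b$, or a transient compound such as $c_i \otimes a$ or $(a'\with c_t)\otimes b$. Since these are precisely the intuitionistic rules of Table~\ref{table:ILL} restricted to $\mathcal{L}^{-}_{\mathbf{I}}$, the very same derivations witness provability in $\mathbf{ILLRR}$, giving the ``accepted $\Rightarrow$ provable'' direction with $\mathbf{CLLRR}$ replaced by $\mathbf{ILLRR}$.

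For the converse I would argue semantically, exactly as in Lemma~\ref{Main Lemma 2}, exploiting the fact that every $\mathbf{ILLRR}$ rule is a special case of the corresponding $\mathbf{CLL}$ rule, namely its single-succedent instance obtained by taking the extra context $\Delta$ empty. Consequently any $\mathbf{ILLRR}$ proof of $(\oc\theta_M)^{g(s_i)}, c_i, a^p, b^q \Rightarrow c_t$ is, read as a classical sequent, already a $\mathbf{CLL}$ proof; and the truth of $\oc\theta_M$ in the phase model $\mathcal{P}_M$ is a purely semantic fact, independent of the deductive system, that was established once and for all in the five cases of Lemma~\ref{Main Lemma 2}. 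Thus Lemma~\ref{inclusion} applies unchanged and forces $c_i a^p b^q \in \bot$, i.e.\ $(s_i, p, q)$ is accepted by $M$. Combining the two directions yields the intuitionistic analogue of Corollary~\ref{Main Corrollary}, and composing it with the evidently computable map $(s_i, p, q) \mapsto (\oc\theta_M)^{g(s_i)}, c_i, a^p, b^q \Rightarrow c_t$ together with Fact~\ref{Und2cm} delivers undecidability.

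I expect no genuine obstacle here, only careful bookkeeping: the single point that must be verified is that the $\mathbf{CLLRR}$ derivations of Lemma~\ref{Main Theorem 2} never split the succedent and never invoke a rule for $\mathbf{1}$, $\bot$, $\parr$, $\wn$, or negation, so that they survive the passage to the single-succedent intuitionistic calculus. Because the succedent remains equal to a single atom $c_t$ throughout the main spines (the auxiliary atoms $a$ and $b$ surfacing on the right only momentarily as premises of $[\otimes r]$), this verification is immediate, and the entire reduction transfers without modification.
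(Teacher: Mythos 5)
Your proposal is correct and follows essentially the same route as the paper: the paper likewise observes that the derivations in Lemma \ref{Main Theorem 2} use only single-succedent instances of $\mathbf{ILLRR}$ rules, and that any $\mathbf{ILLRR}$ proof is a $\mathbf{CLL}$ proof so the phase-semantic argument of Lemma \ref{Main Lemma 2} applies unchanged. Nothing essential differs.
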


\section{Conclusion and Further Direction}\label{comclusion}
This paper establishes the undecidability of two systems $\mathbf{CLLR}$ and $\mathbf{CLLRR}$ both of which can be regarded as the classical propositional linear logic without weakening.

We have two further directions to do in the future. Firstly, we may also study a linear logic system that retains weakening but omits a contraction rule. Since contraction is a main source of difficulty in a proof-search, we conjecture that linear logics without contraction are decidable. Secondly, we can study phase semantics for linear logics with restricted structural rules, including subexponential linear logic~\cite{DanosJoinetSchellinx1993}.
To prove the undecidability of $\mathbf{CLLRR}$, it is sufficient for us to use the soundness theorem for $\mathbf{CLL}$ in terms of phase semantics, and so we do not need to provide a sound and complete phase semantics for linear logics {\em without weakening}. However, we have a prospect of being able to make such a semantics by modifying $\mathcal{I}$ in the definition of the phase semantics. With respect to such a phase semantics, it would be interesting to investigate how to establish semantic arguments for the cut-elimination of linear logic without structural rules.

\bibliographystyle{splncs04}
\bibliography{Undecidability}
\end{document}